\documentclass{article}


\usepackage[nonatbib, preprint]{neurips_2021}
\usepackage{bbm,amsmath}
\usepackage[numbers]{natbib}
\usepackage[ruled,vlined]{algorithm2e}
\usepackage{graphicx}
\usepackage{float}




\usepackage[utf8]{inputenc} 
\usepackage[T1]{fontenc}    
\usepackage{hyperref}       
\usepackage{url}            
\usepackage{booktabs}       
\usepackage{amsfonts}       
\usepackage{nicefrac}       
\usepackage{microtype}      
\usepackage{xcolor}         

\title{Adaptive Conformal Inference\\
  Under Distribution Shift}

%

\author{%
  Isaac Gibbs\\
  Department of Statistics\\
  Stanford University\\
  \texttt{igibbs@stanford.edu} \\
   \And
   Emmanuel J. Cand\`{e}s \\
   Department of Statistics\\
   Department of Mathematics \\
   Stanford University\\
   \texttt{candes@stanford.edu} \\
}

\newcommand{\mme}[0]{\mathbb{E}}
\newcommand{\mmp}[0]{\mathbb{P}}
\newcommand{\mmr}[0]{\mathbb{R}}
\newcommand{\mmn}[0]{\mathbb{N}}

\newcommand{\mmz}[0]{\mathbb{Z}}
\newcommand{\bone}[0]{\mathbbm{1}}

\newtheorem{theorem}{Theorem}[section]
\newtheorem{definition}{Definition}[section]
\newtheorem{proposition}{Proposition}[section]

\newtheorem{lemma}{Lemma}[section]

\newtheorem{example}{Example}[section]

\newenvironment{proof}{\paragraph{Proof:}}{\hfill$\square$}

\begin{document}

\maketitle

\begin{abstract}
  We develop methods for forming prediction sets in an online setting
  where the data generating distribution is allowed to vary over time
  in an unknown fashion. Our framework builds on ideas from conformal
  inference to provide a general wrapper that can be combined with any
  black box method that produces point predictions of the unseen label
  or estimated quantiles of its distribution. While previous conformal
  inference methods rely on the assumption that the data points are
  exchangeable, our adaptive approach provably achieves the desired
  coverage frequency over long-time intervals irrespective of the true data
  generating process. We accomplish this by modelling the distribution
  shift as a learning problem in a single parameter whose optimal
  value is varying over time and must be continuously re-estimated. We
  test our method, {\em adaptive conformal inference}, on two real
  world datasets and find that its predictions are robust to visible
  and significant distribution shifts.
\end{abstract}

\section{Introduction}

Machine learning algorithms are increasingly being employed in high stakes decision making processes. For instance, deep neural networks are currently being used in self-driving cars to detect nearby objects \cite{Badue2021} and parole decisions are being made with the assistance of complex models that combine over a hundred features \cite{Angwin2016}. As the popularity of black box methods and the cost of making wrong decisions grow it is crucial that we develop tools to quantify the uncertainty of their predictions.  

In this paper we develop methods for constructing prediction sets that are guaranteed  to contain the target label with high probability. We focus specifically on an online learning setting in which we observe covariate-response pairs $\{(X_t,Y_t)\}_{t \in \mmn} \subseteq \mathbb{R}^d \times \mathbb{R}$ in a sequential fashion. At each time step $t\in \mmn$ we are tasked with using the previously observed data $\{(X_r,Y_r)\}_{1 \leq r \leq t-1}$ along with the new covariates, $X_{t}$, to form a prediction set $\hat{C}_{t}$ for $Y_{t}$. Then, given a target coverage level $\alpha \in (0,1)$ our generic goal is to guarantee that $Y_{t}$ belongs to $\hat{C}_{t}$ at least $100(1-\alpha)$\% of the time.

Perhaps the most powerful and flexible tools for solving this problem come from conformal inference \cite[see e.g.][]{VovkBook, Gammerman2007, Shafer2008, Lei2014, Sadinle2019, Barber2020, Baber2021} . This framework provides a generic methodology for transforming the outputs of any black box prediction algorithm into a prediction set. The generality of this approach has facilitated the development of a large suite of conformal methods, each specialized to a specific prediction problem of interest \cite[e.g.][]{Romano2020b, Cauchois2021, Lei2018, Candes2021, Lei2021, Kivaranovic2020}. With only minor exceptions all of these algorithms share the same common guarantee that if the training and test data are exchangeable, then the prediction set has valid marginal coverage $\mmp(Y_{t} \in \hat{C}_{t}) = 1-\alpha$. 

While exchangeability is a common assumption, there are many real-world applications in which we do not expect the marginal distribution of $(X_t,Y_t)$ to be stationary. For example, in finance and economics market behaviour can shift drastically in response to new legislation or major world events. Alternatively, the distribution of $(X_t,Y_t)$ may change as we deploy our prediction model in new environments. This paper develops \textit{adaptive conformal inference} (ACI), a method for forming prediction sets that are robust to changes in the marginal distribution of the data. Our approach is both simple, in that it requires only the tracking of a single parameter that models the shift, and general as it can be combined with any modern machine learning algorithm that produces point predictions or estimated quantiles for the response. We show that over long time intervals ACI achieves the target coverage frequency without any assumptions on the data-generating distribution. Moreover, when the distribution shift is small and the prediction algorithm takes a certain simple form we show that  ACI will additionally obtain approximate marginal coverage at most time steps.

\subsection{Conformal inference} \label{sec:conformal_inference}

Suppose we are given a fitted regression model for predicting the value of $Y$ from $X$. Let $y$ be a candidate value for $Y_{t}$. To determine if $y$ is a reasonable estimate of $Y_{t}$, we define a conformity score $S(X,Y)$ that measures how well the value $y$ \emph{conforms} with the predictions of our fitted model. For example, if our regression model produces point predictions $\hat{\mu}(X)$ then we could use a conformity score that measures the distance between $\hat{\mu}(X_{t})$ and $y$. One such example is 
\[
S(X_{t},y) = |\hat{\mu}(X_{t}) - y|.
\]
Alternatively, suppose our regression model outputs estimates $\hat{q}(X;p)$ of the $p$th quantile of the distribution of $Y|X$. Then, we could use the method of conformal quantile regression (CQR) \cite{Romano2019}, which examines the signed distance between $y$ and fitted upper and lower quantiles through the score
\[
S(X_{t},y) = \max\{\hat{q}(X_{t};\alpha/2) -y, y-\hat{q}(X_t;1-\alpha/2) \}.
\]
Regardless of what conformity score is chosen the key issue is to
determine how small $S(X_{t},y)$ should be in order to accept $y$ as a
reasonable prediction for $Y_{t}$. Assume we have a calibration set
$\mathcal{D}_{\text{cal}} \subseteq \{(X_r,Y_r)\}_{1 \leq r \leq t-1}$ that is different from the data that was
used to fit the regression model. Using this calibration set we define
the fitted quantiles of the conformity scores to be
\begin{equation}\label{eq:split_quantile}
\hat{Q}(p) := \inf\left\{ s : \left( \frac{1}{|\mathcal{D}_{\text{cal}}|} \sum_{(X_r,Y_r) \in \mathcal{D}_{\text{cal}}} \bone_{\{S(X_r,Y_r) \leq s\}} \right) \geq p \right\}, 
\end{equation}
and say that $y$ is a reasonable prediction for $Y_{t}$ if $S(X_{t},y) \leq  \hat{Q}(1-\alpha)$. 

The crucial observation is that if the data
$\mathcal{D}_{\text{cal}} \cup \{(X_{t},Y_{t})\}$ are
exchangeable and we break ties uniformly at random then the rank of
$S(X_{t},Y_{t})$ amongst the points
$\{S(X_r,Y_r)\}_{(X_r,Y_r) \in \mathcal{D}_{\text{cal}}} \cup \{S(X_{t},Y_{t})\}$ will be
uniform. Therefore,
\[
\mmp(S(X_{t},Y_{t}) \leq \hat{Q}(1-\alpha)) =  \frac{\lceil |\mathcal{D}_{\text{cal}}|(1-\alpha) \rceil }{|\mathcal{D}_{\text{cal} }| + 1}.
\]
Thus, defining our prediction set to be
$\hat{C}_{t} := \{y : S(X_{t},y) \leq \hat{Q}(1-\alpha)\}$ gives the marginal coverage guarantee
\[
\mmp(Y_{t} \in \hat{C}_{t}) = \mmp(S(X_{t},Y_{t}) \leq \hat{Q}(1-\alpha)) =  \frac{\lceil |\mathcal{D}_{\text{cal}}|(1-\alpha) \rceil }{|\mathcal{D}_{\text{cal}}| + 1}.
\]
By introducing additional randomization this generic procedure can be altered slightly to produce a set $\hat{C}_{t}$ that satisfies the exact marginal coverage guarantee $\mmp(Y_{t} \in \hat{C}_{t}) = 1-\alpha$ \cite{VovkBook}. For the purposes of this paper this adjustment is not critical and so we omit the details here. Additionally, we remark that the method outlined above is often referred to as \textit{split} or \textit{inductive} conformal inference  \cite{Papadopoulos2002, VovkBook, Papadopoulos2008SplitConf}. This refers to the fact that we have split the observed data between a training set used to fit the regression model and a withheld calibration set. The adaptive conformal inference method developed in this article can also be easily adjusted to work with full conformal inference in which data splitting is avoided at the cost of greater computational resources \cite{VovkBook}.

\section{Adapting conformal inference to distribution shifts}\label{sec:main_method}

Up until this point we have been working with a single score function $S(\cdot)$ and quantile function $\hat{Q}(\cdot)$. In the general case where the distribution of the data is shifting over time both these functions should be regularly re-estimated to align with the most recent observations. Therefore, we assume that at each time $t$ we are given a fitted score function $S_{t}(\cdot)$ and corresponding quantile function $\hat{Q}_{t}(\cdot)$. We define the realized miscoverage rate of the prediction set $\hat{C}_{t}(\alpha) :=  \{y : S_t(X_{t},y) \leq  \hat{Q}_t(1-\alpha)\}$ as
\[
M_t(\alpha) := \mmp(S_t(X_{t},Y_{t}) > \hat{Q}_{t}(1-\alpha)),
\]
where the probability is over the test point $(X_{t},Y_{t})$ as well as the data used to fit $S_{t}(\cdot)$ and $\hat{Q}_{t}(\cdot)$. 

Now, since the distribution generating the data is non-stationary we do not expect $M_{t}(\alpha)$ to be equal, or even close to, $\alpha$. Even so, we can still postulate that if the conformity scores used to fit $\hat{Q}_t(\cdot)$ cover the bulk of the distribution of $S_t(X_{t},Y_{t})$ then there may be an alternative value $\alpha^*_t \in [0,1]$ such that $M_t(\alpha^*_t ) \cong \alpha$. More rigorously, assume that with probability one, $\hat{Q}_t(\cdot)$ is continuous, non-decreasing and such that $\hat{Q}_t(0) = -\infty$ and $\hat{Q}_t(1) = \infty$. This does not hold for the split conformal quantile functions defined in (\ref{eq:split_quantile}), but in the case where there are no ties amongst the conformity scores we can adjust our definition to guarantee this by smoothing over the jump discontinuities in $\hat{Q}(\cdot)$. Then, $M_t(\cdot)$ will be non-decreasing on $[0,1]$ with $M_t(0) = 0$ and $M_t(1) = 1$ and so we may define 
\[
\alpha^*_t := \sup\{\beta \in [0,1]: M_t(\beta) \leq \alpha\}.
\]
Moreover, if we additionally assume that 
\[
\mmp(S_t(X_{t},Y_{t}) = \hat{Q}_{t}(1-\alpha^*_t)) = 0, 
\]
then we will have that $M_t(\alpha^*_t) = \alpha$. So, in particular we find that by correctly calibrating the argument to $\hat{Q}_t(\cdot)$ we can achieve either approximate or exact marginal coverage.

To perform this calibration we will use a simple online update. This update proceeds by examining the empirical miscoverage frequency of the previous prediction sets and then decreasing (resp. increasing) our estimate of $\alpha^*_t$ if the prediction sets were historically under-covering (resp. over-covering) $Y_t$. In particular, let $\alpha_1$ denote our initial estimate (in our experiments we will choose $\alpha_1 = \alpha$). Recursively define the sequence of miscoverage events 
\[
\text{err}_t := \begin{cases}
1, \text{ if } Y_{t} \notin \hat{C}_{t}(\alpha_t),\\
0 \text{, otherwise,}
\end{cases}\text{ where } \hat{C}_{t}(\alpha_t) := \{y : S_{t}(X_t,y) \leq \hat{Q}_t(1-\alpha_t) \}.
\]
Then, fixing a step size parameter $\gamma > 0$ we consider the simple online update
\begin{equation}\label{eq:simple_alpha_update}
\alpha_{t+1} := \alpha_t + \gamma(\alpha - \text{err}_t).
\end{equation}
We refer to this algorithm as \textit{adaptive conformal inference}. Here, $\text{err}_t$ plays the role of our estimate of the historical miscoverage frequency. A natural alternative to this is the update
\begin{equation}\label{eq:local_weighted_alpha_update}
\alpha_{t+1} = \alpha_t + \gamma\left(\alpha - \sum_{s=1}^t w_s \text{err}_s \right),
\end{equation}
where $\{w_s\}_{1 \leq s \leq t} \subseteq [0,1]$ is a sequence of increasing weights with $\sum_{s=1}^t w_s=1$. This update has the appeal of more directly evaluating the recent empirical miscoverage frequency when deciding whether or not to lower or raise $\alpha_t$. In practice, we find that (\ref{eq:simple_alpha_update}) and (\ref{eq:local_weighted_alpha_update}) produce almost identical results. For example, in Section \ref{sec:alpha_trajs} in the Appendix we show some sample trajectories for $\alpha_t$ obtained using the update (\ref{eq:local_weighted_alpha_update}) with 
\[
w_s := \frac{0.95^{t-s}}{\sum_{s'=1}^t 0.95^{t-s'}}.
\]
We find that these trajectories are very similar to those produced by (\ref{eq:simple_alpha_update}). The main difference is that the trajectories obtained with (\ref{eq:local_weighted_alpha_update}) are smoother with less local variation in $\alpha_t$. In the remainder of this article we will focus on (\ref{eq:simple_alpha_update}) for simplicity. 

\subsection{Choosing the step size}\label{sec:choosing_gamma}

The choice of $\gamma$ gives a tradeoff between adaptability and stability. While raising the value of $\gamma$ will make the method more adaptive to observed distribution shifts, it will also induce greater volatility in the value of $\alpha_t$. In practice, large fluctuations in $\alpha_t$ may be undesirable as it allows the method to oscillate between outputting small conservative and large anti-conservative prediction sets. 

In Theorem \ref{thm:reg_bound} we give an upper bound on $(M_t(\alpha_t) - \alpha)^2$ that is optimized by choosing $\gamma$ proportional to $\sqrt{|\alpha^*_{t+1} - \alpha^*_{t}|}$. While not directly applicable in practice, this result supports the intuition that in environments with greater distributional shift the algorithm needs to be more adapatable and thus $\gamma$ should be chosen to be larger. In our experiments we will take $\gamma = 0.005$. This value was chosen because it was found to give relatively stable trajectories for $\alpha_t$ while still being sufficiently large as to allow $\alpha_t$ to adapt to observed shifts. In agreement with the general principles outlined above we found that larger values of $\gamma$ also successfully protect against distribution shifts, while taking $\gamma$ to be too small causes adaptive conformal inference to perform similar to non-adaptive methods that hold $\alpha_t = \alpha$ constant across time.

\subsection{Real data example: predicting market volatility}\label{sec:stock_prediction}

We apply ACI to the prediction of market volatility. Let
$\{P_t\}_{1 \leq t \leq T}$ denote a sequence of daily open prices for
a stock. For all $t \geq 2$, define the return
$R_t := (P_t - P_{t-1})/P_{t-1}$ and realized volatility
$V_t = R_t^2$. Our goal is to use the previously observed returns
$X_t := \{R_s\}_{1 \leq s \leq t-1}$ to form prediction sets for
$Y_t := V_t$. More sophisticated financial models might augment $X_t$
with additional market covariates (available to the analyst at time
$t-1$). As the primary purpose of this section is to illustrate
adaptive conformal inference we work with only a simple prediction method.

We start off by forming point predictions using a GARCH(1,1) model \cite{Bollerslev1986}. This method assumes that $R_t = \sigma_t \epsilon_t$ with $\epsilon_2,\dots,\epsilon_T$ taken to be i.i.d. $\mathcal{N}(0,1)$ and $\sigma_t$ satisfying the recursive update
\[
\sigma_t^2 = \omega + \tau V_{t-1} + \beta \sigma_{t-1}^2.
\]
This is a common approach used for forecasting volatility in
economics. In practice, shifting market dynamics can cause the
predictions of this model to become inaccurate over large time
periods. Thus, when forming point predictions we fit the model using
only the last 1250 trading days (i.e. approximately 5 years) of market
data. More precisely, for all times $t > 1250$ we fit the coefficients
$\hat{\omega}_{t},\ \hat{\tau}_t, \hat{\beta}_t$ as well as the
sequence of variances $\{\hat{\sigma}^t_{s}\}_{1 \leq s \leq t-1}$
using only the data $\{R_r\}_{t-1250 \leq r < t}$. Then, our point
prediction for the realized volatility at time $t$ is 
\[
(\hat{\sigma}^t_t)^2 := \hat{\omega}_{t} +\hat{\tau}_t V_{t-1} + \hat{\beta}_t( \hat{\sigma}^t_{t-1})^2.
\]
To form prediction intervals we define the sequence of conformity scores 
\[
S_t := \frac{|V_t - (\hat{\sigma}^t_t)^2|}{(\hat{\sigma}^t_t)^2}
\]
and the corresponding quantile function
\[
\hat{Q}_t(p) := \inf\left\{ x : \frac{1}{1250} \sum_{r=t-1250}^{t-1} \bone_{S_r \leq x} \geq p\right\}. 
\]
Then, our prediction set at time $t$ is
\[
\hat{C}_{t}(\alpha_t) := \left\{v : \frac{|v - (\hat{\sigma}^t_t)^2|}{(\hat{\sigma}^t_t)^2} \leq \hat{Q}_t(1-\alpha_t) \right\},
\]
where $\{\alpha_t\}$ is initialized with $\alpha_{1250}=\alpha = 0.1$ and then updated recursively as in (\ref{eq:simple_alpha_update}). 

We compare this algorithm to a non-adaptive alternative that takes $\alpha_t = \alpha$ fixed. To measure the performance of these methods across time we examine their local coverage frequencies defined as the average coverage rate over the most recent two years, i.e.
\begin{equation}\label{eq:local_cov_freq}
\text{localCov}_t := 1-\frac{1}{500} \sum_{r=t-250+1}^{t+250} \text{err}_r.
\end{equation}
If the methods perform well then we expect the local coverage frequency to stay near the target value $1-\alpha$ across all time points.

\begin{figure}
  \centering
  \includegraphics[scale=0.35]{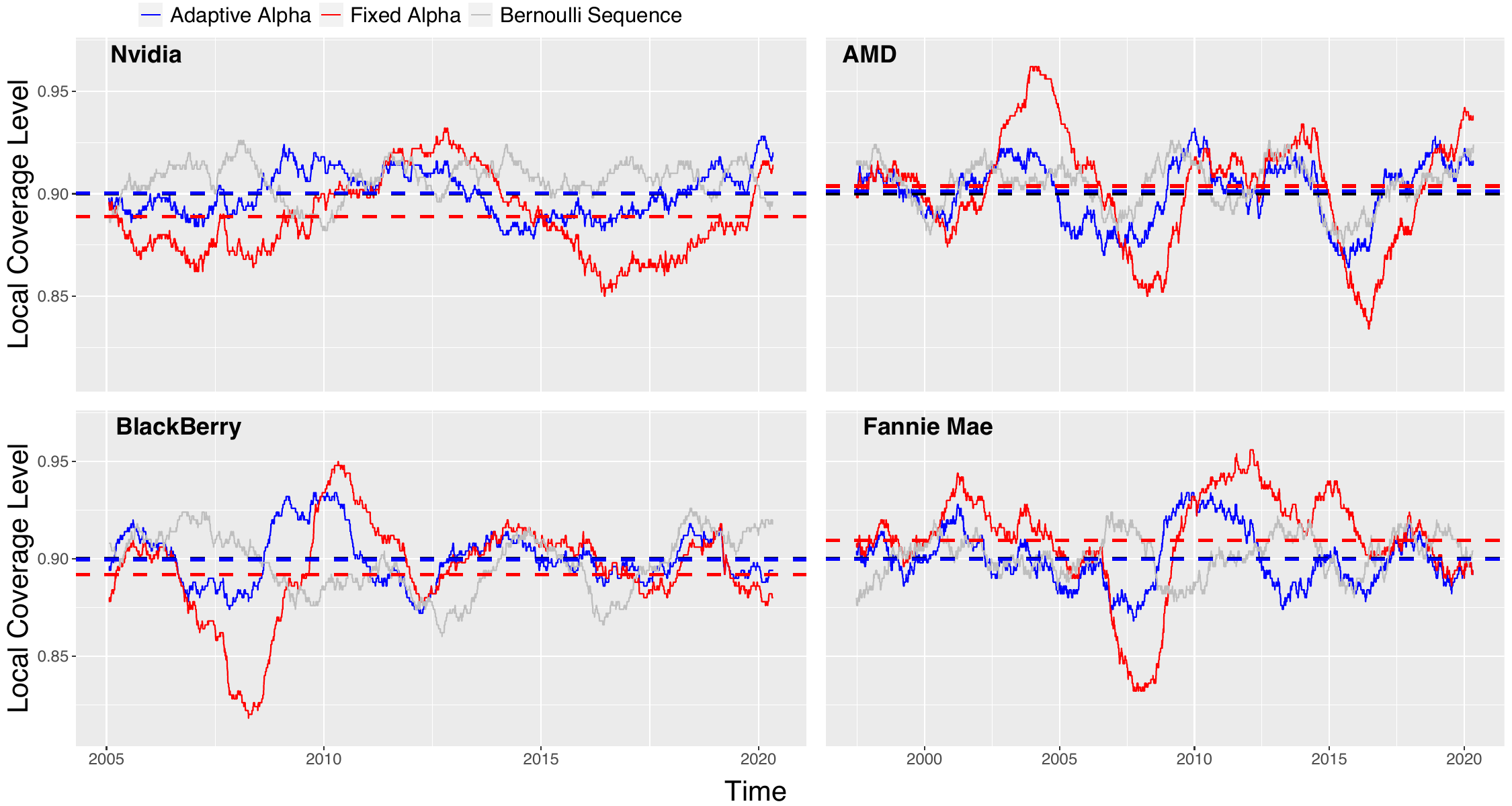}
  \caption{Local coverage frequencies for adaptive conformal (blue), a non-adaptive method that holds $\alpha_t = \alpha$ fixed (red), and an i.i.d. Bernoulli(0.1) sequence (grey) for the prediction of stock market volatility. The coloured dotted lines mark the average coverage obtained across all time points, while the black line indicates the target level of $1-\alpha=0.9$. }
  \label{fig:stock_cov}
\end{figure}

Daily open prices were obtained from publicly available datasets published by \textit{The Wall Street Journal}. The realized local coverage frequencies for the non-adaptive and adaptive conformal methods on four different stocks are shown in Figure \ref{fig:stock_cov}. These stocks were selected out of a total of 12 stocks that we examined because they showed a clear failure of the non-adaptive method. Adaptive conformal inference was found to perform well in all cases (see Figure \ref{fig:additional_stocks} in the appendix).

As a visual comparator, the grey curves show the moving average $1-\frac{1}{500} \sum_{r=t-250+1}^{t+250} I_r$ for sequences $\{I_t\}_{1 \leq t \leq T}$ that are i.i.d.~Bernoulli(0.1). We see that the local coverage frequencies obtained by adaptive conformal inference (blue lines) always stay within the variation that would be expected from an i.i.d. Bernoulli sequence. On the other hand, the non-adaptive method undergoes large excursions away from the target level of $1-\alpha = 0.9$ (red lines). For example, in the bottom right panel we can see that the non-adaptive method fails to cover the realized volatility of Fannie Mae during the 2008 financial crisis, while the adaptive method is robust to this event (see Figure \ref{fig:stock_prices} in the Appendix for a plot of the price of Fannie Mae over this time period).

\section{Related Work}

Prior work on conformal inference has considered two different types of distribution shift \cite{Tibs2019, Maxime2020}. In both cases the focus was on environments in which the calibration data is drawn i.i.d. from a single distribution $P_0$, while the test point comes from a second distribution $P_1$. In this setting \citet{Tibs2019} showed that valid prediction sets can be obtained by re-weighting the calibration data using the likelihood ratio between $P_1$ and $P_0$. However, this requires the conditional distribution of $Y|X$ to be constant between training and testing and the likelihood ratio $P_{1}(X)/P_{0}(X)$ to be either known or very accurately estimated. On the other hand, \citet{Maxime2020} develop methods for forming prediction sets that are valid whenever $P_1$ and $P_0$ are close in $f$-divergence. Similar to our work, they show that if $D_f(P_1||P_0) \leq \rho$ then there exists a conservative value $\alpha_{\rho} \in (0,1)$ such that 
\[
M(\alpha_{\rho}) := \mmp(S(X_{t},Y_{t}) > \hat{Q}(1-\alpha_{\rho}))  \leq \alpha.
\]
The difference between our approach and theirs is twofold. First, while they fix a single conservative value $\alpha_{\rho}$ our methods aim to estimate the optimal choice $\alpha^*$ satisfying $M(\alpha^*) = \alpha$. This is not possible in the setting of \cite{Maxime2020} as they do not observe any data from which the size of the distribution shift can be estimated. Second, while they consider only one training and one testing distribution we work in a fully online setting in which the distribution is allowed to shift continuously over time.   

\section{Coverage guarantees}\label{sec:theory}

\subsection{Distribution-free results}\label{sec:agnostic_bounds}

In this section we outline the theoretical coverage guarantees of adaptive conformal inference. We will assume throughout that with probability one $\alpha_1 \in [0,1]$ and $\hat{Q}_t$ is non-decreasing with $\hat{Q}_t(x)  = -\infty$ for all $x < 0$ and $\hat{Q}_t(x) = \infty$ for all $x>1$. Our first result shows that over long time intervals adaptive conformal inference obtains the correct coverage frequency irrespective of any assumptions on the data-generating distribution.

\begin{lemma}\label{lem:alphat_is_bounded}
With probability one we have that $\forall t \in \mmn$, $\alpha_t \in [-
\gamma,1+\gamma]$.
\end{lemma}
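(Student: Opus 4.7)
The plan is induction on $t$. The base case is immediate since the lemma's standing assumption gives $\alpha_1 \in [0,1] \subseteq [-\gamma,1+\gamma]$. For the inductive step I would split into two regimes. In the interior regime $\alpha_t \in [0,1]$, the argument is purely a step-size estimate: because $\text{err}_t \in \{0,1\}$ and $\alpha \in (0,1)$, the increment $\gamma(\alpha-\text{err}_t)$ is always contained in $[-\gamma(1-\alpha),\gamma\alpha]\subseteq(-\gamma,\gamma)$, so $\alpha_{t+1}$ lies in $[-\gamma,1+\gamma]$ with room to spare.

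The substantive case is when $\alpha_t$ lies in one of the boundary strips $[-\gamma,0)$ or $(1,1+\gamma]$, where the step-size bound alone would permit $\alpha_{t+1}$ to drift outside $[-\gamma,1+\gamma]$. Here I would exploit the boundary conventions on $\hat{Q}_t$ to pin down $\text{err}_t$ deterministically. When $\alpha_t>1$, the argument of $\hat{Q}_t$ appearing in the definition of $\hat{C}_t(\alpha_t)$ crosses the threshold where $\hat{Q}_t$ is declared infinite in the direction that forces $\hat{C}_t(\alpha_t)$ to be empty; this pins $\text{err}_t=1$ and therefore $\alpha_{t+1}=\alpha_t-\gamma(1-\alpha)\leq(1+\gamma)-\gamma(1-\alpha)=1+\gamma\alpha\leq 1+\gamma$. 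Symmetrically, when $\alpha_t<0$ the prediction set becomes all of $\mathbb{R}$, $\text{err}_t$ is forced to $0$, and $\alpha_{t+1}=\alpha_t+\gamma\alpha\geq -\gamma+\gamma\alpha=-\gamma(1-\alpha)\geq -\gamma$. In both cases $\alpha_{t+1}\in[-\gamma,1+\gamma]$, which closes the induction.

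There is no real obstacle here; the lemma is essentially a reflecting-barrier statement, saying that the infinities in $\hat{Q}_t$ at the endpoints of $[0,1]$ act as one-sided reflectors for $\alpha_t$, while the step-size bound $|\alpha_{t+1}-\alpha_t|\leq\gamma$ prevents the iterate from jumping over either reflector in a single update. The only care required is to verify that the sign conventions on $\hat{Q}_t$ match the definition of $\hat{C}_t(\alpha_t)$ so that $\alpha_t$ exceeding $1$ truly empties the prediction set rather than inflating it, and symmetrically at $0$; once that matching is correct, the induction runs on autopilot.
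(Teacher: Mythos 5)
Your proof is correct and rests on exactly the same mechanism as the paper's: the step bound $|\alpha_{t+1}-\alpha_t|<\gamma$ keeps the iterate from jumping past the strips $[-\gamma,0)$ and $(1,1+\gamma]$, and the infinities of $\hat{Q}_t$ outside $[0,1]$ force $\text{err}_t$ deterministically there so that the update pushes $\alpha_t$ back toward $[0,1]$. The paper packages this as a short proof by contradiction (a first-passage argument) rather than a direct induction, but the content is identical, and your reading of the sign convention matches the one the paper's own proof uses.
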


\begin{proof}
Assume by contradiction that with positive probability $\{\alpha_t\}_{t \in \mmn}$ is such that $\inf_t \alpha_t < -\gamma$ (the case where $\sup_t \alpha_t > 1+\gamma$ is identical). Note that $\sup_{t}|\alpha_{t+1} - \alpha_{t}| = \sup_{t}\gamma|\alpha - \text{err}_t| < \gamma$. Thus, with positive probability we may find $t \in \mmn$ such that $\alpha_t <0$ and $\alpha_{t+1} < \alpha_t$. However,
\[
\alpha_{t} < 0 \implies \hat{Q}_t(1-\alpha_t) = \infty \implies \text{err}_{t} = 0 \implies \alpha_{t+1} = \alpha_{t} + \gamma(\alpha - \text{err}_{t}) \geq \alpha_{t}
\] 
and thus $\mmp(\exists t \text{ such that } \alpha_{t+1} < \alpha_t < 0) =0$. We have reached a contradiction.
\end{proof}

\begin{proposition}\label{prop:bound_on_errs} 
 With probability one we have that for all $T \in \mmn$,
\begin{equation}\label{eq:bound_on_errs}
\left| \frac{1}{T} \sum_{t=1}^T \text{err}_t - \alpha \right| \leq \frac{\max\{\alpha_1,1-\alpha_1\}+\gamma}{T\gamma}.
\end{equation}
In particular, $\lim_{T \to \infty} \frac{1}{T}\sum_{t=1}^T \text{err}_t \stackrel{a.s.}{=}\alpha.$
\end{proposition}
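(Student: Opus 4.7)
The plan is to exploit the telescoping structure of the update rule \eqref{eq:simple_alpha_update}. Rewriting the update as $\alpha_{t+1} - \alpha_t = \gamma(\alpha - \text{err}_t)$ and summing over $t = 1,\dots,T$ gives
\[
\alpha_{T+1} - \alpha_1 \;=\; \gamma T \alpha - \gamma \sum_{t=1}^T \text{err}_t.
\]
Dividing by $T\gamma$ yields an \emph{exact} identity
\[
\frac{1}{T}\sum_{t=1}^T \text{err}_t - \alpha \;=\; \frac{\alpha_1 - \alpha_{T+1}}{T\gamma},
\]
so the problem reduces to bounding $|\alpha_1 - \alpha_{T+1}|$ uniformly in $T$.

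This is exactly what Lemma \ref{lem:alphat_is_bounded} is there for: with probability one, $\alpha_{T+1} \in [-\gamma, 1+\gamma]$. Combined with the assumption $\alpha_1 \in [0,1]$, this gives
\[
|\alpha_1 - \alpha_{T+1}| \;\leq\; \max\{\alpha_1 - (-\gamma),\, (1+\gamma) - \alpha_1\} \;=\; \max\{\alpha_1, 1-\alpha_1\} + \gamma,
\]
which, after dividing by $T\gamma$, is precisely the stated bound \eqref{eq:bound_on_errs}. The almost-sure limit follows immediately by letting $T \to \infty$, since the right-hand side tends to zero.

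There is essentially no obstacle here: the entire argument is a one-line telescoping observation, and all the real work has already been done in Lemma \ref{lem:alphat_is_bounded}, where the self-correcting nature of the update (it automatically reverses direction before $\alpha_t$ can escape $[-\gamma, 1+\gamma]$, because $\hat{Q}_t(\alpha_t)$ is $\pm\infty$ outside $[0,1]$) was established. The only minor point to be careful about is that the bound and the limit both need to hold with probability one \emph{for all} $T$ simultaneously, but this is inherited directly from the almost-sure statement of Lemma \ref{lem:alphat_is_bounded}.
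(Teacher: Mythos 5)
Your proposal is correct and is essentially the same argument as the paper's: expand the recursion \eqref{eq:simple_alpha_update} into the telescoping identity $\alpha_{T+1} = \alpha_1 + \gamma\sum_{t=1}^T(\alpha - \text{err}_t)$, invoke Lemma \ref{lem:alphat_is_bounded} to confine $\alpha_{T+1}$ to $[-\gamma,1+\gamma]$, and rearrange. The explicit computation $|\alpha_1-\alpha_{T+1}|\leq\max\{\alpha_1,1-\alpha_1\}+\gamma$ is exactly right.
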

\begin{proof}
By expanding the recursion defined in (\ref{eq:simple_alpha_update}) and applying Lemma \ref{lem:alphat_is_bounded} we find that
\[
[-\gamma,1+\gamma]  \ni \alpha_{T+1} = \alpha_1 + \sum_{t=1}^{T} \gamma(\alpha - \text{err}_t).
\]
Rearranging this gives the result.
\end{proof}

Proposition \ref{prop:bound_on_errs} puts no constraints on the
data generating distribution. One may immediately ask whether these
results can be improved by making mild assumptions on the distribution
shifts. We argue that without assumptions on the quality of the
initialization the answer to this question is negative. To understand this, consider a setting in which there is a single fixed optimal target $\alpha^* \in [0,1]$ and assume that
\[
M_t(p) = M(p)  =  \begin{cases}
\alpha + \frac{1-\alpha}{1-\alpha^*}(p-\alpha^*), \text{ if } p > \alpha^*,\\
\alpha + \frac{\alpha}{\alpha^*}(p-\alpha^*) \text{ if } p \leq \alpha^*.
\end{cases}.
\]
Suppose additionally that $\mme[\text{err}_t|\alpha_t] = M(\alpha_t)$.\footnote{This last assumption is in general only true if $\hat{Q}_t(\cdot)$ and $\alpha_t$ are fit independently of one another.}   In order to simplify the calculations consider the noiseless update $\alpha_{t+1} = \alpha_t + \gamma(\alpha-M(\alpha_t)) =  \alpha_t + \gamma(\alpha-\mme[\text{err}_t|\alpha_t])$. Intuitively, the noiseless update can be viewed as the average case behaviour of (\ref{eq:simple_alpha_update}). Now, for any initialization $\alpha_1$ and any $\gamma \leq \min\{ \frac{1-\alpha^*}{1-\alpha} ,  \frac{\alpha^*}{\alpha}\}$ there exists a constant $c \in \{ \frac{1-\alpha}{1-\alpha^*} ,  \frac{\alpha}{\alpha^*}\}$ such that for all $t$, $M(\alpha_t) - \alpha = c(\alpha_t -\alpha^*)$. So, we have that 
\[
\mme[\text{err}_t] - \alpha = c\mme[\alpha_t - \alpha^*] = c\mme[\alpha_{t-1} + \gamma(\alpha-M_{t-1}(\alpha_{t-1})) - \alpha^*] = c(1-c\gamma) \mme[\alpha_{t-1} -\alpha^*].
\]
Repeating this calculation recursively gives that 
\[
\mme[\text{err}_t]   - \alpha  = c(1-c\gamma)^{t-1}\mme[\alpha_1- \alpha^*] = c(1-c\gamma)^{t-1}(\alpha_1 - \alpha^*),
\] 
and thus,
\[
\left| \frac{1}{T} \sum_{t=1}^T \mme[\text{err}_t] - \alpha \right| = \frac{1- (1-c\gamma)^{T}}{T\gamma} |\alpha_1 - \alpha^*|.
\]
The comparison of this bound to (\ref{eq:bound_on_errs}) is self-evident. The main difference is that we have replaced $\max\{1-\alpha_1,\alpha_1\}$ with $|\alpha_1 - \alpha^*|$. This arises from the fact that $\alpha^* \in (0,1)$ is arbitrary and thus $\max\{1-\alpha_1,\alpha_1\}$  is the best possible upper bound on $|\alpha_1 - \alpha^*|$. So, we view Proposition \ref{prop:bound_on_errs} as both an agnostic guarantee that shows that our method gives the correct long-term empirical coverage frequency irrespective of the true data generating process, and as an approximately tight bound on the worst-case behaviour immediately after initialization. 

\subsection{Performance in a hidden Markov model}\label{sec:MC_converage_bounds}

Although we believe Proposition \ref{prop:bound_on_errs} is an approximately tight characterization of the behaviour after initialization, we can still ask whether better bounds can be obtained for large time steps. In this section we answer this question positively by showing that if $\alpha_1$ is initialized appropriately and the distribution shift is small, then tighter coverage guarantees can be given. In order to obtain useful results we will make some simplifying assumptions about the data generating process. While we do not expect these assumptions to hold exactly in any real-world setting, we do consider our results to be representative of the true behaviour of adaptive conformal inference and we expect similar results to hold under alternative models.

\subsubsection{Setting}

We model the data as coming from a hidden Markov model. In particular, we let $\{A_t\}_{t \in \mmn} \subseteq \mathcal{A}$ denote the underlying Markov chain for the environment and we assume that conditional on $\{A_t\}_{t \in \mmn}$, $\{(X_t,Y_t)\}_{t \in \mmn}$ is an independent sequence with $(X_t,Y_t) \sim P_{A_t}$ for some collection of distributions $\{P_a:a \in \mathcal{A}\}$. In order to simplify our calculations, we assume additionally that the estimated quantile function $\hat{Q}_{t}(\cdot)$ and score function $S_t(\cdot)$ do not depend on $t$ and we denote them by $\hat{Q}(\cdot)$ and $S(\cdot)$. This occurs for example in the split conformal setting with fixed training and calibration sets. 

In this setting, $\{(\alpha_t,A_t)\}_{t \in \mmn}$ forms a Markov
chain on $[-\gamma,1+\gamma] \times \mathcal{A}$. We assume that this
chain has a unique stationary distribution $\pi$ and that
$(\alpha_1,A_1) \sim \pi$. This implies that
$(\alpha_t,A_t,\text{err}_t)$ is a stationary process and thus will
greatly simplify our characterization of the behaviour of
$\text{err}_t$. While there is little doubt that the theory can be
extended, recall our that main goal is to get useful and simple results.
That said, what we really have in mind here is that
$\{A_t\}_{ t \in \mmn}$ is sufficiently well-behaved to guarantee that
$(\alpha_t,A_t)$ has a limiting stationary distribution. In Section
\ref{sec:MC_stationary_dist} we give an example where this is indeed
provably the case. Lastly, the assumption that
$(\alpha_1,A_1) \sim \pi$ is essentially equivalent to assuming that
we have been running the algorithm for long enough to exit the
initialization phase described in Section \ref{sec:agnostic_bounds}.

\subsubsection{Large deviation bound for the errors}

Our first observation is that $\text{err}_t$ has the correct average value. More precisely, by Proposition \ref{prop:bound_on_errs} we have that $\lim_{T \to \infty} T^{-1} \sum_{t=1}^T \text{err}_t \stackrel{a.s.}{=} \alpha$ and since $\text{err}_t$ is stationary it follows that $\mme[\text{err}_t] = \alpha$. Thus, to understand the deviation of $T^{-1}\sum_{t=1}^T \text{err}_t $ from $\alpha$ we simply need to characterize the dependence structure of $\{\text{err}_t\}_{t \in \mmn}$. 

We accomplish this in Theorem \ref{thm:large_dev_err_bound}, which
gives a large deviation bound on
$|T^{-1}\sum_{t=1}^T \text{err}_t - \alpha|$. The idea behind this
result is to decompose the dependence in
$\{\text{err}_t\}_{t \in \mmn}$ into two parts. First, there is
dependence due to the fact that $\alpha_t$ is a function of
$\{\text{err}_r\}_{1 \leq r \leq t-1}$. In Section
\ref{sec:large_deviation_proof} in the Appendix we argue that this
dependence induces a negative correlation and thus the errors
concentrate around their expectation at a rate no slower than that of
an i.i.d. Bernoulli sequence. This gives rise to the first term in
(\ref{eq:large_dev_bound}), which is what would be obtained by
applying Hoeffding's inequality to an i.i.d. sequence. Second, there
is dependence due to the fact that $A_t$ depends on $A_{t-1}$. More
specifically, consider a setting in which the distribution of $Y|X$
has more variability in some states than others. The goal of adaptive
conformal inference is to adapt to the level of variability and thus
return larger prediction sets in states where the distribution of
$Y|X$ is more spread. However, this algorithm is not perfect and as a
result there may be some states $a \in \mathcal{A}$ in which
$\mme[\text{err}_t|A_t=a]$ is biased away from $\alpha$. Furthermore,
if the environment tends to spend long stretches of time in more
variable (or less variable) states this will induce a positive
dependence in the errors and cause $T^{-1}\sum_{t=1}^T \text{err}_t$
to deviate from $\alpha$. To control this dependence we use a
Bernstein inequality for Markov chains to bound
$|T^{-1}\sum_{t=1}^T \mme[\text{err}_t|A_t] - \alpha|$. This gives
rise to the second term in (\ref{eq:large_dev_bound}).

\begin{theorem}\label{thm:large_dev_err_bound}
Assume that $\{A_t\}_{t \in \mmn}$ has non-zero absolute spectral gap $1-\eta > 0$. Let 
\[
 B :=  \sup_{a \in \mathcal{A}} |\mme[\textup{err}_t|A_t=a] - \alpha|  \ \ \ \text{ and } \ \ \  \sigma^2_B := \mme[(\mme[\textup{err}_t|A_t] - \alpha)^2].
\]
Then,
\begin{equation}\label{eq:large_dev_bound}
\mmp\left(\left|\frac{1}{T}\sum_{t=1}^T \textup{err}_t - \alpha\right| \geq \epsilon\right) \leq 2\exp\left( -  \frac{T\epsilon^2}{8} \right) + 2 \exp\left( - \frac{T(1-\eta)\epsilon^2}{8(1+\eta)\sigma^2_B + 20 B\epsilon} \right) .
\end{equation}
\end{theorem}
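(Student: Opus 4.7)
The plan is to split the centered sum into a within-state fluctuation and a between-state (Markov chain) fluctuation, bound each by a different concentration inequality, and take a union bound. Write
$$\frac{1}{T}\sum_{t=1}^T \text{err}_t - \alpha = \frac{1}{T}\sum_{t=1}^T \bigl(\text{err}_t - g(A_t)\bigr) + \frac{1}{T}\sum_{t=1}^T \bigl(g(A_t) - \alpha\bigr),$$
where $g(a) := \mme[\text{err}_t \mid A_t = a]$ is well-defined by the stationarity of $(\alpha_t, A_t)$. By Proposition \ref{prop:bound_on_errs} and stationarity, $\mme[g(A_t)] = \mme[\text{err}_t] = \alpha$, so the between-state term is mean zero. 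If the left-hand side exceeds $\epsilon$ in absolute value, at least one of the two pieces exceeds $\epsilon/2$; the two exponential terms in (\ref{eq:large_dev_bound}) will correspond exactly to these two events.

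For the between-state term $\frac{1}{T}\sum_t (g(A_t) - \alpha)$, the function $a \mapsto g(a) - \alpha$ is mean-zero under the stationary distribution of $\{A_t\}$, uniformly bounded by $B$, and has variance $\sigma_B^2$. Combined with the absolute spectral gap $1-\eta$, this is exactly the setting of Bernstein-type concentration for stationary Markov chains (e.g.\ Paulin's inequality), which yields a bound of the form $2\exp(-T(1-\eta)(\epsilon/2)^2 / [c_1(1+\eta)\sigma_B^2 + c_2 B (\epsilon/2)])$. Choosing the version whose constants match gives the second exponential term in (\ref{eq:large_dev_bound}).

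For the within-state term $\frac{1}{T}\sum_t (\text{err}_t - g(A_t))$, the summands lie in $[-1,1]$ with conditional mean zero given $A_t$, so if they were independent Hoeffding's inequality at level $\epsilon/2$ would immediately deliver $2\exp(-T\epsilon^2/8)$, which is the first exponential term. The nontrivial point is that the feedback through $\alpha_t$ does not degrade this bound: an excess of past miscoverages drives $\alpha_t$ toward values with larger prediction sets and therefore smaller future miscoverage probability, inducing negative correlation among the errors once the Markov chain is averaged out. The plan is to verify a sub-Gaussian MGF estimate $\mme\bigl[\exp\bigl(\lambda\sum_t(\text{err}_t - g(A_t))\bigr)\bigr] \le \exp(T\lambda^2/8)$ and then Chernoff-optimize.

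The main obstacle is precisely this MGF estimate, because the direct filtration-based decomposition $\text{err}_t - g(A_t) = (\text{err}_t - p_t(\alpha_t, A_t)) + (p_t(\alpha_t, A_t) - g(A_t))$, where $p_t(\alpha, a) := \mmp(\text{err}_t = 1 \mid \alpha_t = \alpha, A_t = a)$, produces a bounded martingale-difference piece (amenable to Azuma--Hoeffding) but a residual $p_t(\alpha_t, A_t) - g(A_t)$ that is not a martingale difference and does not telescope, since the conditional law of $\alpha_t$ given $A_t$ under $\pi$ is different from its law given $\mathcal{F}_{t-1}, A_t$. I would handle this residual by a monotone coupling argument: exploit the fact that $p_t(\cdot, a)$ is monotone in $\alpha_t$ and that the update $\alpha_{t+1} = \alpha_t + \gamma(\alpha - \text{err}_t)$ is self-correcting to couple the actual trajectory against a stationary one, pushing any deviation in the MGF into the favorable (negatively correlated) direction. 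This is the step deferred to Appendix \ref{sec:large_deviation_proof} of the paper.
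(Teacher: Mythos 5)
Your decomposition into a within-state and a between-state term, the union bound at level $\epsilon/2$, and the treatment of the between-state term via a Bernstein inequality for stationary Markov chains with spectral gap $1-\eta$ all match the paper's proof exactly. The gap is in the within-state term: the one genuinely nontrivial step --- a sub-Gaussian MGF bound for $\sum_t(\text{err}_t - \mme[\text{err}_t\mid A_t])$ despite the feedback through $\alpha_t$ --- is only announced, and the route you sketch for it (a ``monotone coupling'' of the realized trajectory against a stationary one) is not well posed in this setting: the chain is already assumed to start from $\pi$, so the realized trajectory \emph{is} the stationary one. The obstruction you correctly diagnose is not a discrepancy between two trajectories but the fact that the conditional law of $\alpha_t$ given $\mathcal{F}_{t-1}$ (a point mass) differs from its conditional law given $A_t$ under $\pi$, which is what makes $p_t(\alpha_t,A_t)-g(A_t)$ fail to be a martingale difference; a coupling between trajectories does not address this.

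What the paper does instead (Lemma \ref{lem:removing_alphat_cor}) is more direct and avoids both martingales and couplings. Condition on $\alpha_1$ and the entire state sequence $A_1,\dots,A_t$. Then for $\lambda\ge 0$ the partial product $f(\text{err}_1,\dots,\text{err}_{t-1})=\prod_{s=1}^{t-1}\exp(\lambda(\text{err}_s-\mme[\text{err}_s\mid A_s]))$ is non-decreasing in each past error, while the conditional MGF of the last factor,
\[
g(\text{err}_1,\dots,\text{err}_{t-1})=\mme\bigl[\exp(\lambda(\text{err}_t-\mme[\text{err}_t\mid A_t]))\mid \text{err}_1,\dots,\text{err}_{t-1}\bigr],
\]
is non-increasing in each past error, because $\alpha_t=\alpha_1+\gamma\sum_{s<t}(\alpha-\text{err}_s)$ decreases with past errors and hence lowers the current miscoverage probability --- this is exactly your negative-correlation intuition. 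A Chebyshev/FKG association inequality for oppositely monotone functions (Lemma \ref{lem:neg_corr_mon_fns}) gives $\mme[fg]\le\mme[f]\,\mme[g]$, Hoeffding's lemma bounds $\mme[g]$ by $\exp(\lambda^2/2)$, and induction over $t$ yields $\mme[\exp(\lambda\sum_{t=1}^T(\text{err}_t-\mme[\text{err}_t\mid A_t]))]\le\exp(T\lambda^2/2)$. Note that this weaker constant (not the $\exp(T\lambda^2/8)$ you target) already suffices: Chernoff at $\lambda=\epsilon/2$ gives $\exp(-T\epsilon^2/8)$. So your mechanism is right, but the tool that converts it into an MGF bound is a conditional association inequality, not a coupling, and without that (or an equivalent) lemma the proof is incomplete.
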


A formal proof of this result can be found in Section \ref{sec:large_deviation_proof}. The quality of this concentration inequality will depend critically on the size of the bias terms $B$ and $\sigma^2_B$. Before proceeding, it is important that we emphasize that the definitions of $B$ and $\sigma^2_B$ are independent of the choice of $t$ owing to the fact that $(\alpha_t,A_t,\text{err}_t)$ is assumed stationary. Now, to understand these quantities, let
\[
M(p|a) := \mmp(S(X_t,Y_t) > \hat{Q}(1-p) | A_t = a)
\] 
denote the realized miscoverage level in state $a \in \mathcal{A}$ obtained by the quantile $\hat{Q}(1-p)$. Assume that $M(p|a)$ is continuous. This will happen for example when $\hat{Q}(\cdot)$ is continuous and $S(X_t,Y_t)|A_t=a$ is continuously distributed. Then, there exists an optimal value $\alpha^*_a$ such that $M(\alpha^*_a|a) = \alpha$. Lemma \ref{lem:bound_on_mean_given_a} in the Appendix shows that if in addition $M(\cdot|a)$ admits a second order Taylor expansion, then 
\[
B \leq C\left(\gamma + \gamma^{-1} \sup_{a \in \mathcal{A}} \sup_{k \in \mmn} \mme[|\alpha^*_{A_{t+1}} - \alpha^*_{A_t}| \big|A_{t+k} = a] \right) \ \ \ \text{ and } \ \ \ \sigma^2_B \leq B^2.
\]
Here, the constant $C$ will depend on how much $M(\cdot|a)$ differs from the ideal case in which $\hat{Q}(\cdot)$ is the true quantile function for $S(X_t,Y_t)|A_t=a$. In this case we would have that $M(\cdot|a)$ is the linear function $M(p|a) =  p$, $\forall p \in [0,1]$ and $C \leq 2$.

We remark that the term $\mme[|\alpha^*_{A_{t+1}} - \alpha^*_{A_t}| \big|A_{t+k} = a]$ can be seen as a quantitative measurement of the size of the distribution shift in terms of the change in the critical value $\alpha^*_a$. Thus, we interpret these results as showing that if the distribution shift is small and $\forall a \in \mathcal{A}$, $\hat{Q}(\cdot)$ gives reasonable coverage of the distribution of $S(X_t,Y_t)|A_t=a$, then $T^{-1}\sum_{t=1}^T \text{err}_t$ will concentrate well around $\alpha$. 

\subsubsection{Achieving approximate marginal coverage}

Theorem \ref{thm:large_dev_err_bound} bounds the distance between the average miscoverage rate and the target level over long stretches of time. On the other hand, it provides no information about the marginal coverage frequency at a single time step. The following result shows that if the distribution shift is small, the realized marginal coverage rate $M(\alpha_t|A_t)$ will be close to $\alpha$ on average.
\begin{theorem}\label{thm:reg_bound}
Assume that there exists a constant $L>0$ such that for all $a \in \mathcal{A}$ and all $\alpha_1,\alpha_2 \in \mmr$,
\[
|M(\alpha_2|a) - M(\alpha_1|a)| \leq L|\alpha_2 - \alpha_1|.
\]
Assume additionally that for all $a \in \mathcal{A}$ there exists $\alpha^*_a \in (0,1)$ such that $M(\alpha^*_a|a) = \alpha$. Then,
\begin{equation}\label{eq:regret_bound}
 \mme[(M(\alpha_{t}|A_t) - \alpha)^2] \leq \frac{L(1+\gamma)}{\gamma} \mme[|\alpha^*_{A_{t+1}} - \alpha^*_{A_t}|] + \frac{L}{2}\gamma.
\end{equation}
\end{theorem}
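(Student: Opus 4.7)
The plan is to run a Lyapunov/drift argument on the squared distance from $\alpha_t$ to the state-dependent optimum $\alpha^*_{A_t}$, in the spirit of standard online-learning analyses of SGD with a moving target. Set
$V_t := (\alpha_t - \alpha^*_{A_t})^2$ and $\Delta_t := \alpha^*_{A_{t+1}} - \alpha^*_{A_t}$. The key identities I will use are (i) $\mme[\text{err}_t \mid \alpha_t, A_t] = M_{A_t}(\alpha_t)$, which follows from the hidden Markov structure since $(X_t,Y_t)\mid A_t$ is independent of past data while $\alpha_t$ is a deterministic function of past errors, and (ii) the monotonicity of $p \mapsto M_a(p)$, inherited from the monotonicity of $\hat Q$ in the assumptions of Section \ref{sec:agnostic_bounds}.

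First, I would expand the update rule:
\begin{equation*}
(\alpha_{t+1} - \alpha^*_{A_t})^2 = V_t + 2\gamma(\alpha - \text{err}_t)(\alpha_t - \alpha^*_{A_t}) + \gamma^2(\alpha - \text{err}_t)^2 .
\end{equation*}
Taking expectations and conditioning on $(\alpha_t, A_t)$, the cross term becomes $-2\gamma\,\mme[(M_{A_t}(\alpha_t) - M_{A_t}(\alpha^*_{A_t}))(\alpha_t - \alpha^*_{A_t})]$, where I have used $M_{A_t}(\alpha^*_{A_t})=\alpha$. Because $M_a$ is monotone, the two factors share a sign, and combining the Lipschitz bound $|M_a(\alpha_t)-M_a(\alpha^*_a)|\le L|\alpha_t-\alpha^*_a|$ with the trivial $|M_a(\alpha_t)-M_a(\alpha^*_a)|\cdot|\alpha_t-\alpha^*_a|\ge L^{-1}(M_a(\alpha_t)-\alpha)^2$ yields
\begin{equation*}
\mme[(\alpha_{t+1} - \alpha^*_{A_t})^2] \le \mme[V_t] - \frac{2\gamma}{L}\,\mme[(M_{A_t}(\alpha_t) - \alpha)^2] + \gamma^2,
\end{equation*}
using also $(\alpha-\text{err}_t)^2 \le 1$. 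Rearranging gives a bound on $\mme[(M_{A_t}(\alpha_t)-\alpha)^2]$ in terms of $\mme[V_t] - \mme[(\alpha_{t+1}-\alpha^*_{A_t})^2]$ plus a $\tfrac{L\gamma}{2}$ penalty, which already matches the second term on the right-hand side of \eqref{eq:regret_bound}.

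The second step is to control the Lyapunov increment by relating $\alpha^*_{A_t}$ to $\alpha^*_{A_{t+1}}$. Writing $\alpha_{t+1}-\alpha^*_{A_t} = (\alpha_{t+1}-\alpha^*_{A_{t+1}}) + \Delta_t$ and expanding the square,
\begin{equation*}
\mme[V_t] - \mme[(\alpha_{t+1}-\alpha^*_{A_t})^2] = -2\,\mme[(\alpha_{t+1}-\alpha^*_{A_{t+1}})\Delta_t] - \mme[\Delta_t^2],
\end{equation*}
where I have invoked stationarity to substitute $\mme[V_{t+1}]$ for $\mme[V_t]$. The quadratic term is negative, and Cauchy--Schwarz with Lemma \ref{lem:alphat_is_bounded} (which gives $|\alpha_{t+1}-\alpha^*_{A_{t+1}}| \le 1+\gamma$) bounds the first term by $2(1+\gamma)\,\mme[|\Delta_t|]$. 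Plugging back and multiplying by $L/(2\gamma)$ yields exactly \eqref{eq:regret_bound}.

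The main obstacle, and the thing I want to get right, is the combination of Lipschitz and monotonicity in step one: without monotonicity the cross term could have either sign, and without the Lipschitz lower bound we could not reintroduce $(M_{A_t}(\alpha_t)-\alpha)^2$ on the right. Both properties are available here, with monotonicity coming from the structural assumption on $\hat Q$ rather than the theorem's hypothesis. The only other subtle point is that stationarity of $(\alpha_t, A_t)$ is what lets me turn the telescoping-style estimate into a non-trivial bound on a single time $t$; without it the argument would only yield something averaged over a long window, which is already delivered by Theorem \ref{thm:large_dev_err_bound}.
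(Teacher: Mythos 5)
Your proof is correct and follows essentially the same Lyapunov argument as the paper: expand $(\alpha_{t+1}-\alpha^*_{A_t})^2$, use $\mme[\text{err}_t\mid\alpha_t,A_t]=M_{A_t}(\alpha_t)$ together with the Lipschitz bound to lower-bound the cross term by $L^{-1}(M_{A_t}(\alpha_t)-\alpha)^2$, and control the shift of the target via Lemma \ref{lem:alphat_is_bounded}. The only (cosmetic) difference is that you invoke stationarity at a single step ($\mme[V_{t+1}]=\mme[V_t]$) and drop $-\mme[\Delta_t^2]\le 0$, whereas the paper telescopes the sum over $t=1,\dots,T$ and lets $T\to\infty$; both routes yield the identical bound.
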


Once again we emphasize that (\ref{eq:regret_bound}) holds for any choice of $t$ owing to the fact that $(\alpha_t,A_t,\text{err}_t)$ is assumed stationary and thus the quantities appearing in the bound are invariant across $t$. Proof of this result can be found in Section \ref{sec:approx_marginal_proof} of the Appendix. We remark that the right-hand side of (\ref{eq:regret_bound}) is minimized by choosing $\gamma = (2\mme[|\alpha^*_{A_{t+1}} - \alpha^*_{A_t}|])^{1/2}$, which gives the inequality
\[
 \mme[(M(\alpha_{t}|A_t) - \alpha)^2] \leq L(\sqrt{2}+1)\sqrt{\mme[|\alpha^*_{A_{t+1}} - \alpha^*_{A_t}|]}.
\]
As above we have that in the ideal case $\hat{Q}(\cdot)$ is a perfect estimate of the quantiles of $S(X_t,Y_t)|A_t=a$ and thus $M(p|a) = p$ and $L=1$. Moreover, we once again have the interpretation that $\mme[|\alpha^*_{A_{t+1}} - \alpha^*_{A_t}|]$ is a quantitative measurement of the distribution shift. Thus, this result can be interpreted as bounding the average difference between the realized and target marginal coverage in terms of the size of the underlying distribution shift. Finally, note that the choice $\gamma = (2\mme[|\alpha^*_{A_{t+1}} - \alpha^*_{A_t}|])^{1/2}$ formalizes our intuition that $\gamma$ should be chosen to be larger in domains with greater distribution shift, while not being so large as to cause $\alpha_t$ to be overly volatile.

\section{Impact of $\pmb{S_t}(\cdot)$ on the performance}

The performance of all conformal inference methods depends heavily on the design of the conformity score. Previous work has shown how carefully chosen  scores or even explicit optimization of the interval width can be used to obtain smaller prediction sets \cite[e.g.][]{Romano2019,Romano2020,Kivaranovic2020Adaptive,Chen2021}. Adaptive conformal inference can work with any conformity score $S_t(\cdot)$ and quantile function $\hat{Q}_t(\cdot)$ and thus can be directly combined with other improvements in conformal inference to obtain shorter intervals. One important caveat here is that the lengths of conformal prediction sets depend directly on the quality of the fitted regression model. Thus, to obtain smaller intervals one should re-fit the model at each time step using the most recent data to build the most accurate predictions. This is exactly what we have done in our experiments in Sections \ref{sec:stock_prediction} and \ref{sec:election_forecasting}.

In addition to this, the choice of $S_t(\cdot)$ can also have a direct effect on the coverage properties of adaptive conformal inference. Theorems  \ref{thm:large_dev_err_bound} and \ref{thm:reg_bound} show that the performance of adaptive conformal inference is controlled by the size of the shift in the optimal parameter $\alpha^*_t$ across time. Moreover, $\alpha^*_t$ itself is in one-to-one correspondence with the $1-\alpha$ quantile of $S_t(X_t,Y_t)$. Thus, the coverage properties of adaptive conformal inference depend on how close $S_t(X_t,Y_t)$ is to being stationary.

For a simple example illustrating the impact of this dependence, note that in Section \ref{sec:stock_prediction} we formed prediction sets using the conformity score
\[
S_t := \frac{|V_t - \hat{\sigma}^2_t|}{\hat{\sigma}^2_t}.
\]
An \textit{a priori} reasonable alternative to this is the unnormalized score 
\[
\tilde{S}_t := |V_t - \hat{\sigma}^2_t|.
\]
However, after a more careful examination it becomes unsurprising that normalization by $\hat{\sigma}^2_t$ is critical for obtaining an approximately stationary conformity score and thus $\tilde{S}_t$ leads to much worse coverage properties. Figure \ref{fig:stock_cov_bad_st} shows the local coverage frequency (see (\ref{eq:local_cov_freq})) of adaptive conformal inference using $\tilde{S}_t$. In comparison to Figure \ref{fig:stock_cov} the coverage now undergoes much wider swings away from the target level of 0.9. This issue can be partially mitigated by choosing a larger value of $\gamma$ that gives greater adaptivity to the algorithm.

\begin{figure}
\centering
\includegraphics[scale=0.36]{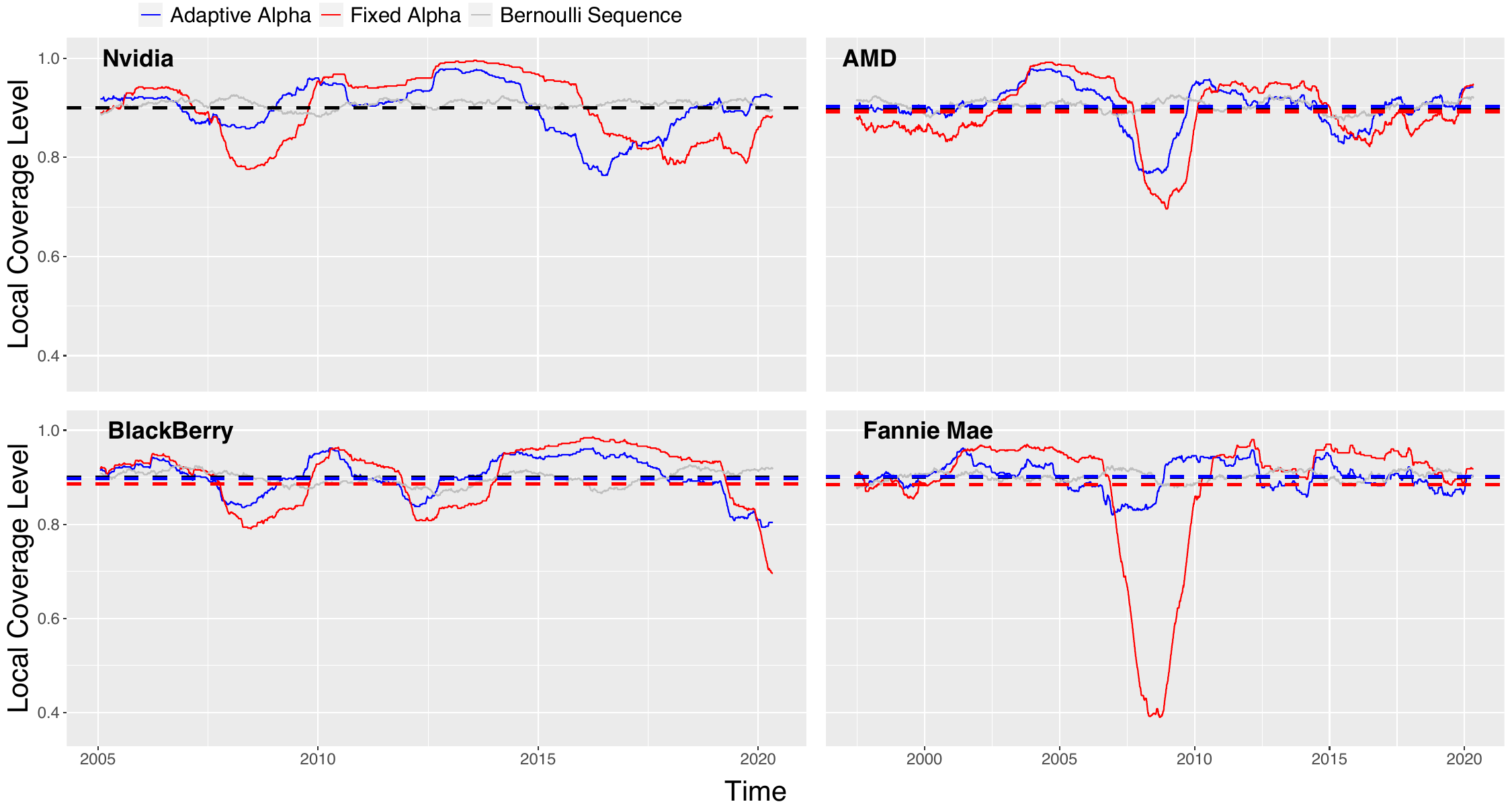}
 \caption{Local coverage frequencies for adaptive conformal (blue), a non-adaptive method that holds $\alpha_t = \alpha$ fixed (red), and an i.i.d. Bernoulli(0.1) sequence (grey) for the prediction of stock market volatility with conformity score $\tilde{S}_t$. The coloured dotted lines mark the average coverage obtained across all time points, while the black line indicates the target level of $1-\alpha=0.9$. }
    \label{fig:stock_cov_bad_st}
\end{figure}

\section{Real data example: election night predictions}\label{sec:election_forecasting}

During the 2020 US presidential election \textit{The Washington Post} used conformalized quantile regression (CQR) (see (\ref{eq:split_quantile}) and Section \ref{sec:conformal_inference}) to produce county level predictions of the vote total on election night \cite{Cherian2020}. Here we replicate the core elements of this method using both fixed and adaptive quantiles. 

To make the setting precise, let $\{Y_t\}_{1 \leq t \leq T}$ denote
the number of votes cast for presidential candidate Joe Biden in the 2020
election in each of approximately $T=3000$ counties in the United
States. Let $X_t$ denote a set of demographic covariates associated to
the $t$th county. In our experiment $X_t$ will include information on
the make-up of the county population by ethnicity, age, sex, median
income and education (see Section \ref{sec:election_dat} for
details). On election night county vote totals were observed as soon
as the vote count was completed. If the order in which vote counts
completed was uniformly random $\{(X_t,Y_t)\}_{1 \leq t \leq T}$
would be an exchangeable sequence on which we could run standard
conformal inference methods. In reality, larger urban counties tend to
report results later than smaller rural counties and counties on the
east coast of the US report earlier than those on the west coast. Thus, the
distribution of $(X_t,Y_t)$ can be viewed as drifting throughout
election night.

\begin{figure}
  \centering
  \includegraphics[scale=0.34]{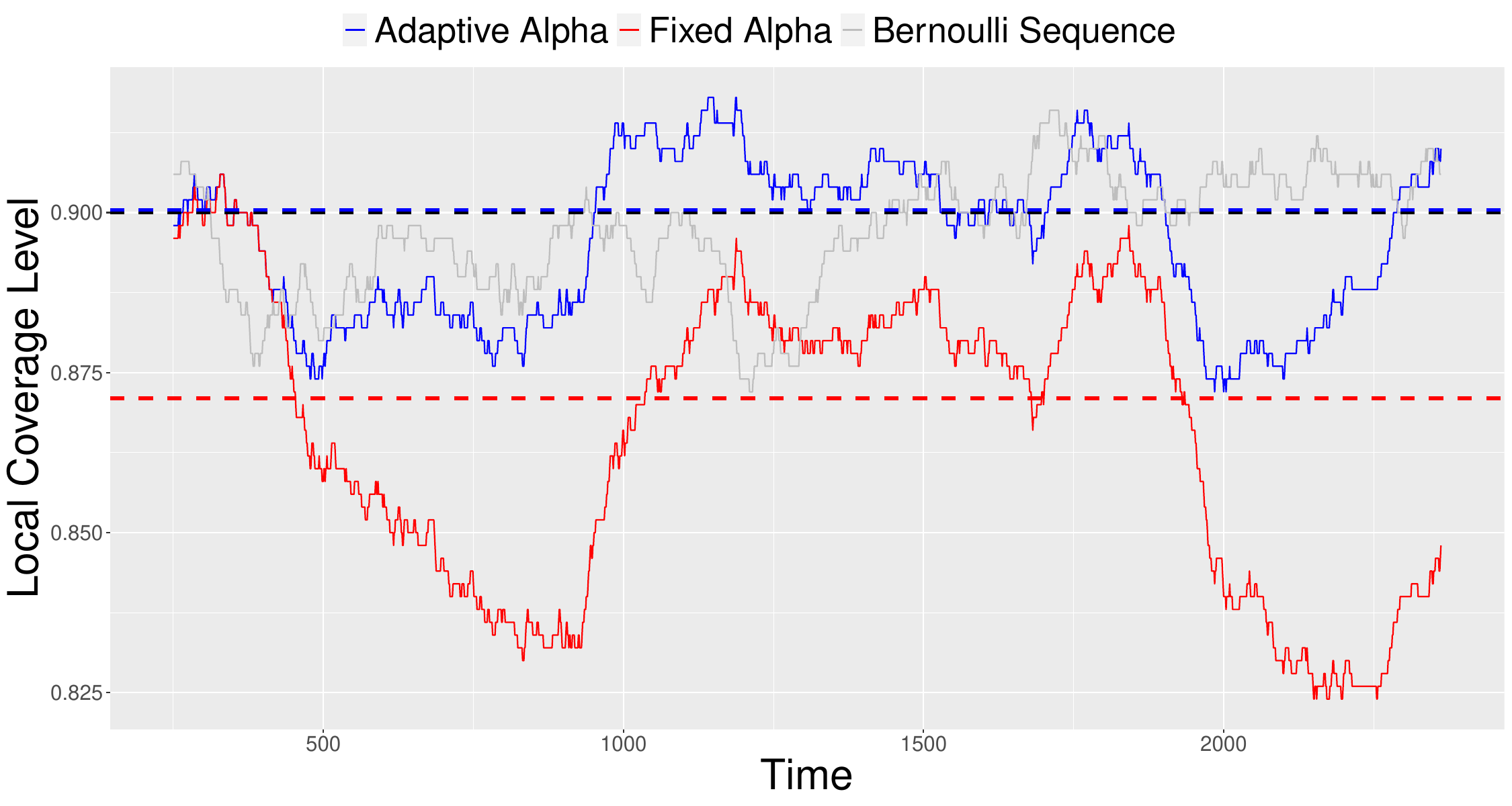}
  \caption{Local coverage frequencies of adaptive conformal (blue), a non-adaptive method that holds $\alpha_t = \alpha$ fixed (red), and an i.i.d. Bernoulli(0.1) sequence (grey) for county-level election predictions. Coloured dotted lines show the average coverage across all time points, while the black line indicates the target coverage level of $1-\alpha=0.9$. }
  \label{fig:election_coverage}
\end{figure}

 We apply CQR to predict the county-level vote totals (see Section \ref{sec:cqr_alg} for details). To replicate the east to west coast bias observed on election night we order the counties by their time zone with eastern time counties appearing first and Hawaiian counties appearing last. Within each time zone counties are ordered uniformly at random. Figure \ref{fig:election_coverage} shows the realized local coverage frequency over the most recent 300 counties (see (\ref{eq:local_cov_freq})) for the non-adaptive and adaptive conformal methods. We find that the non-adaptive method fails to maintain the desired $90\%$ coverage level, incurring large troughs in its coverage frequency during time zone changes. On the other hand, the adaptive method maintains approximate $90\%$ coverage across all time points with deviations in its local coverage level comparable to what is observed in Bernoulli sequences.  

\section{Discussion}\label{sec:limitations}

There are still many open problems in this area. The methods we
develop are specific to cases where $Y_t$ is revealed at each time
point. However, there are many settings in which we receive the
response in a delayed fashion or in large batches. In addition, our
theoretical results in Section \ref{sec:MC_converage_bounds} are
limited to a single model for the data generating distribution and the
special case where the quantile function $\hat{Q}_t(\cdot)$ is fixed across
time. It would be interesting to determine if similar results can be
obtained in settings where $\hat{Q}_t(\cdot)$ is fit in an online
fashion on the most recent data. Another potential area for
improvement is in the choice of the step size $\gamma$. In Section
\ref{sec:choosing_gamma} we give some heuristic guidelines for
choosing $\gamma$ based on the size of the distribution shift in the
environment. Ideally however we would like to be able to determine
$\gamma$ adaptively without prior knowledge. Finally, our experimental
results are limited to just two domains. Additional work is needed to
determine if our methods can successfully protect against a wider
variety of real-world distribution shifts.

\section{Acknowledgements}

E.C. was supported by Office of Naval Research grant N00014-20-12157, by the National Science Foundation grants OAC 1934578 and DMS 2032014, by the Army Research Office (ARO) under grant W911NF-17-1-0304, and by the Simons Foundation under award 814641. We thank John Cherian for valuable discussions related to Presidential Election Night 2020 and Lihua Lei for helpful comments on the relationship to online learning.

\bibliography{OnlineConformalPredicitonBib.bib}
\bibliographystyle{plainnat}

\newpage


\appendix

\section{Appendix}

\subsection{Connection to online learning}

In Section \ref{sec:main_method} we motivated the update (\ref{eq:simple_alpha_update}) as a way to adjust the size of our prediction sets in response to the realized historical miscoverage frequency. Alternatively, one could also derive (\ref{eq:simple_alpha_update}) as an online gradient descent algorithm with respect to the pinball loss. To be more precise let
\[
\beta_t := \sup\{\beta : Y_t \in \hat{C}_t(\beta)\},
\]
where we remark that $\hat{C}_t(\beta_t)$ can be thought of as the smallest prediction set containing $Y_t$. Additionally, define the pinball loss
\[
 \rho_{\alpha}(u) = \begin{cases}
 \alpha u,\ u > 0,\\
 -(1- \alpha)u,\ u\leq 0.\\
 \end{cases}.
\]
and consider the loss $\ell(\alpha_t,\beta_t) = \rho_{\alpha}(\beta_t - \alpha_t)$. Then, one directly computes that 
\[
\alpha_t -  \gamma \partial_{\alpha_t} \ell(\alpha_t,\beta_t) = \alpha_t +  \gamma(\alpha - \bone_{\alpha_t > \beta_t}) =  \alpha_t +  \gamma(\alpha - \text{err}_t).
\]
Because the pinball loss is convex, this gradient descent update falls within a well understood class of algorithms that have been extensively studied in the online learning literature (see e.g. \cite{Hazan2019}). A standard analysis may then be to bound the regret of $\alpha_t$ defined as
\[
\text{Reg}_T := \sum_{t=1}^T \ell(\alpha_t,\beta_t) - \min_{\beta} \sum_{t=1}^T \ell(\beta,\beta_t).
\]    
Unfortunately, this notion of regret fails to capture our intuition that $\alpha_t$ is adaptively tracking the moving target $\alpha^*_t$. Thus, we make the connection to online gradient descent only in passing and develop alternative theoretical tools in Section \ref{sec:theory}.

\subsection{Stock prices}

Figure \ref{fig:stock_prices} shows daily open prices for the four stocks considered in Section \ref{sec:stock_prediction}.

\begin{figure}[H] 
  \centering
  \includegraphics[scale=0.38]{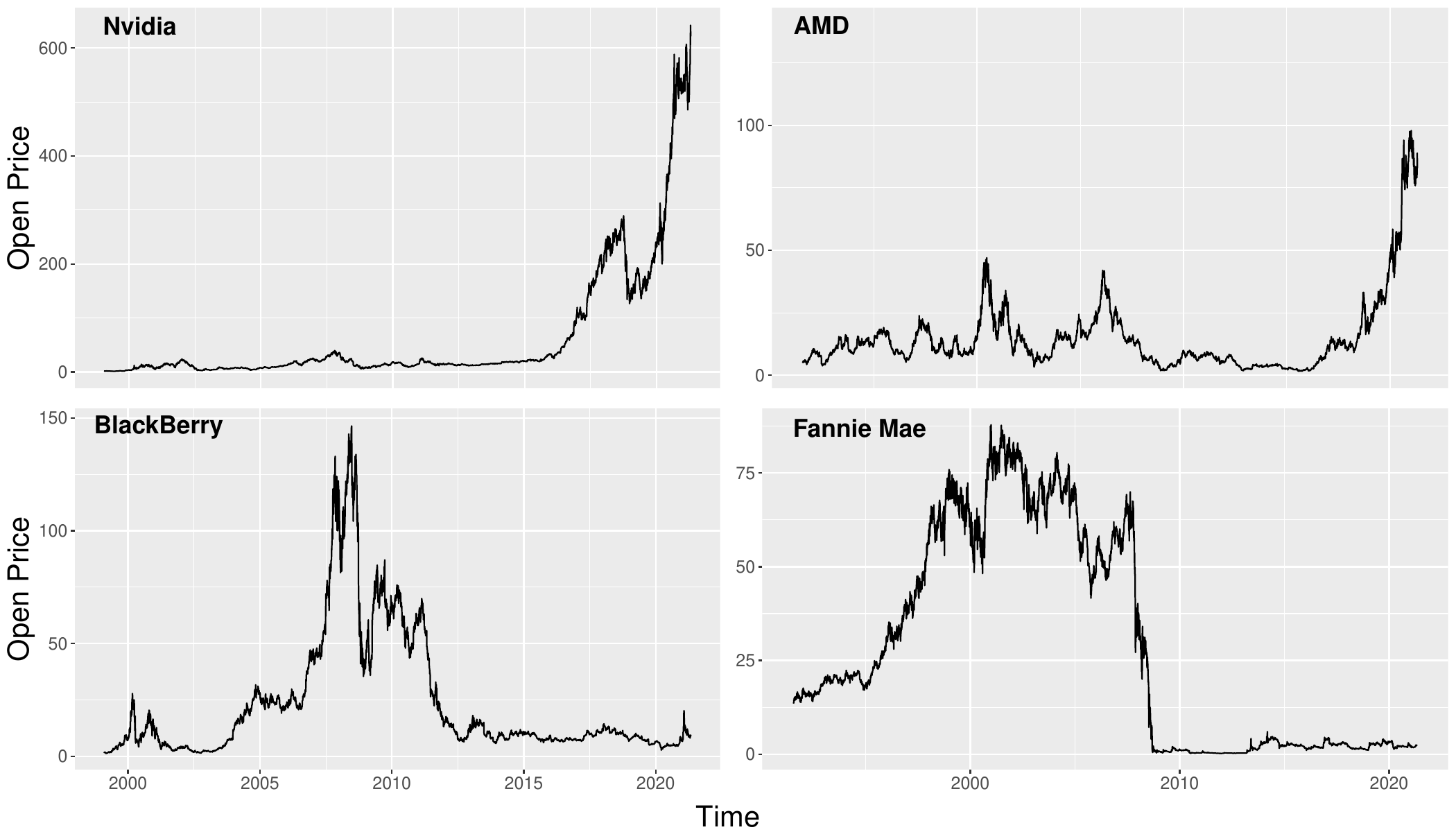}
  \caption{Daily open prices for the four stocks considered in Section \ref{sec:stock_prediction}.}
  \label{fig:stock_prices}
\end{figure}

\subsection{Trajectories of $\pmb{\alpha_t}$}\label{sec:alpha_trajs}

In this section we show the realized trajectories of $\alpha_t$ obtained in our experiments from Sections \ref{sec:stock_prediction} and \ref{sec:election_forecasting}. 

 \begin{figure}[H]
  \centering
  \includegraphics[scale=0.38]{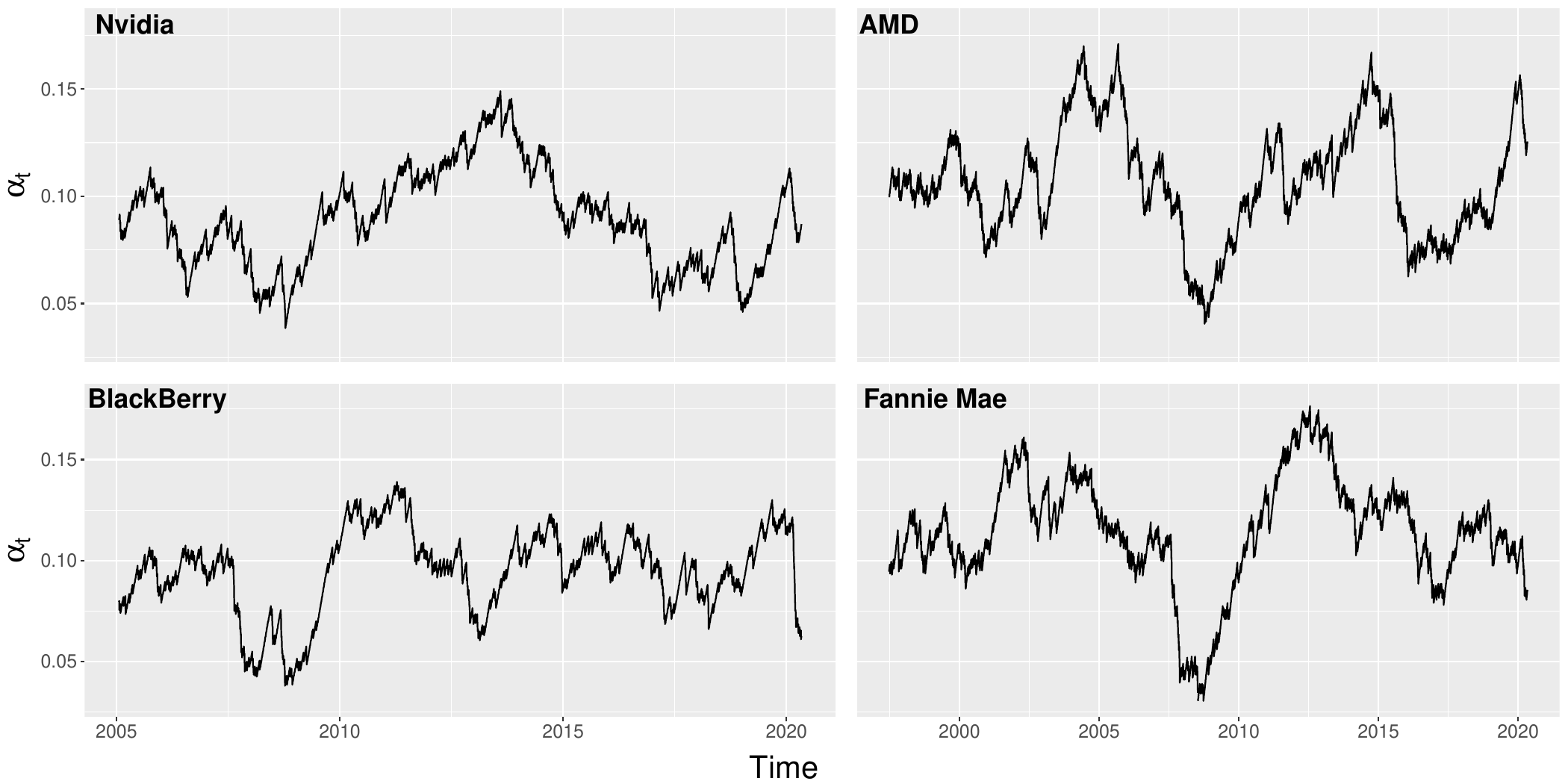}
  \caption{Realized trajectories of $\alpha_t$ for predicting stock market volatility as outlined in Section \ref{sec:stock_prediction} using update (\ref{eq:simple_alpha_update}).}
\label{fig:stock_alpha_traj_simple}
\end{figure}

 \begin{figure}[H]
  \centering
  \includegraphics[scale=0.38]{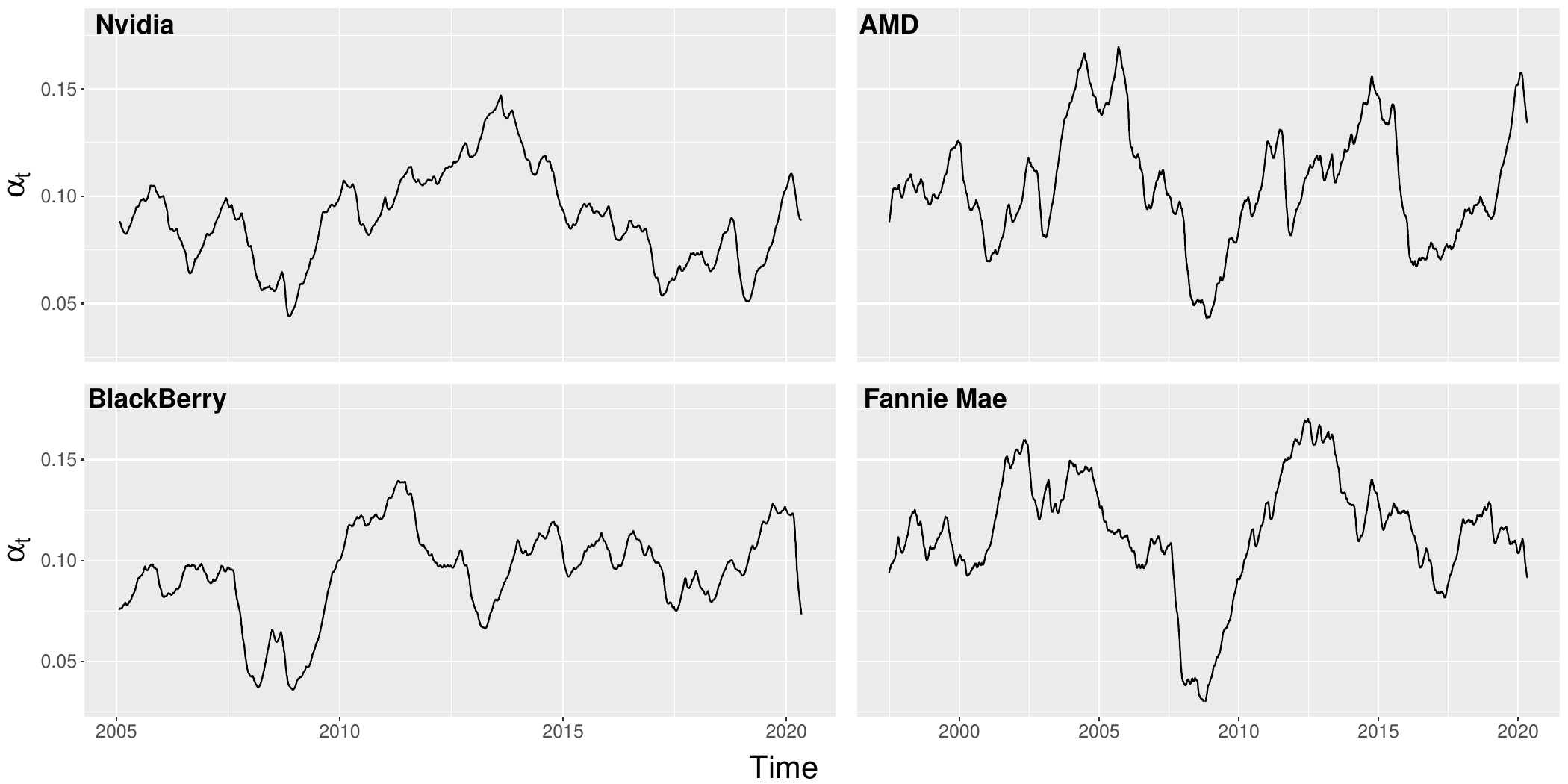}
  \caption{Realized trajectories of $\alpha_t$ for predicting stock market volatility as outlined in Section \ref{sec:stock_prediction} using update (\ref{eq:local_weighted_alpha_update}) with $w_s \propto 0.95^{t-s}$.}
\label{fig:stock_alpha_traj_momentum}
\end{figure}

 \begin{figure}[H]
  \centering
  \includegraphics[scale=0.36]{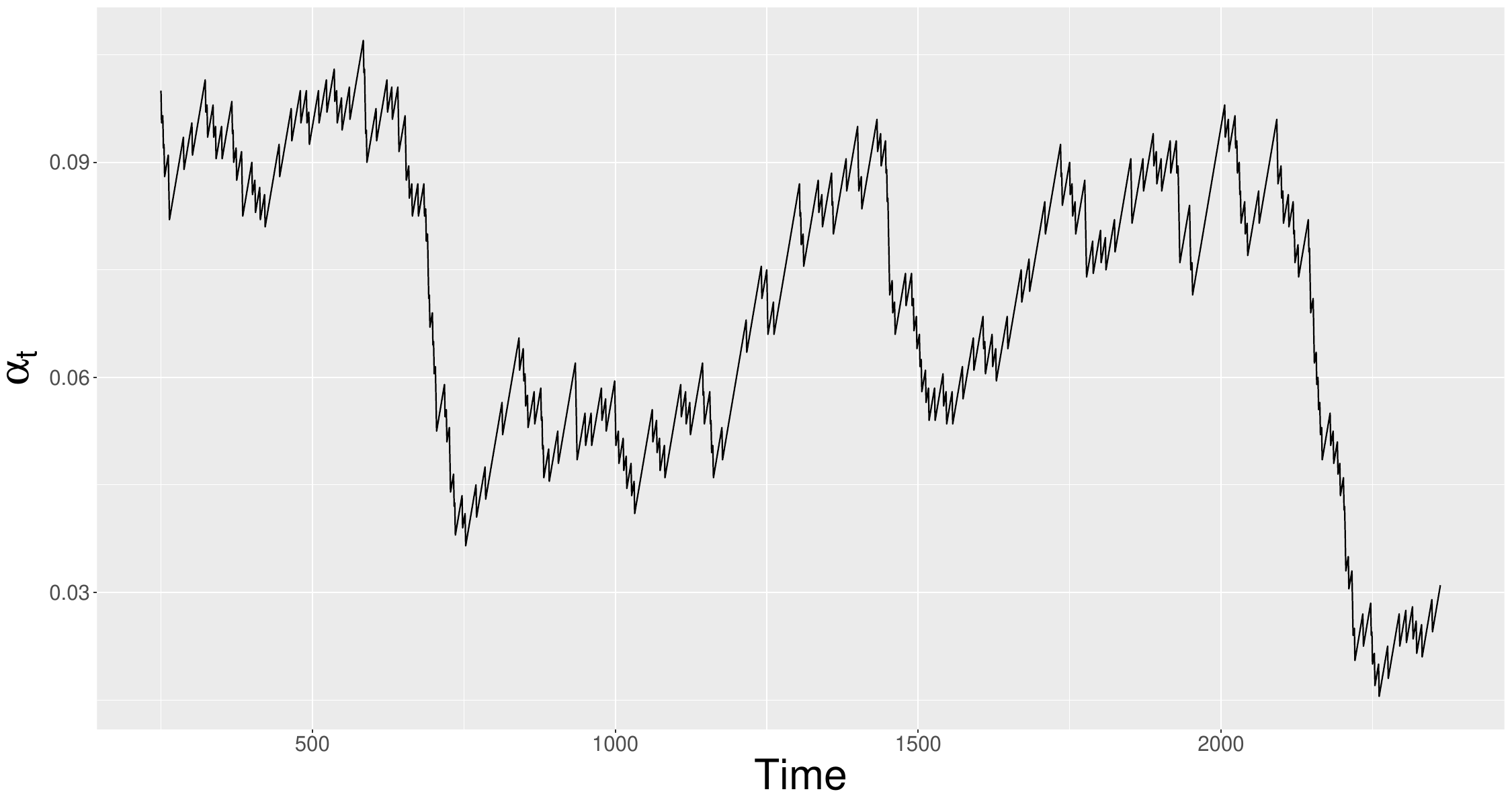}
  \caption{Realized trajectory of $\alpha_t$ for election night
    forecasting as outlined in Section \ref{sec:election_forecasting}
    using update (\ref{eq:simple_alpha_update}).}
  \label{fig:elec_alpha_traj_simple}
\end{figure}

\begin{figure}[H]
  \centering
  \includegraphics[scale=0.36]{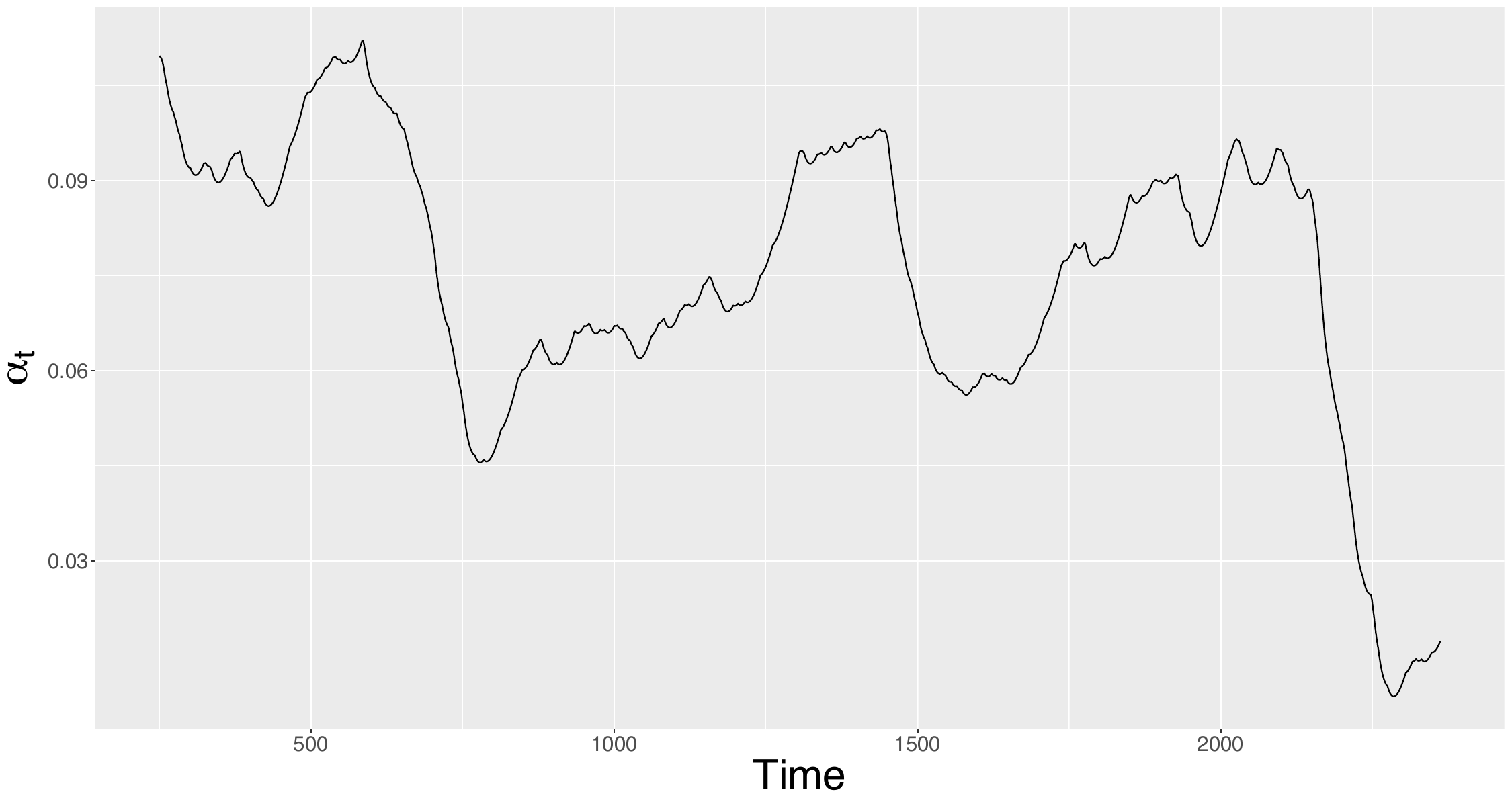}
  \caption{Realized trajectory of $\alpha_t$ for election night forecasting as outlined in Section \ref{sec:election_forecasting} using update (\ref{eq:local_weighted_alpha_update}) with $w_s \propto 0.95^{t-s}$. }
    \label{fig:elec_alpha_traj_momentum}
\end{figure}

\subsection{Coverage for additional stocks} 

Figure \ref{fig:additional_stocks} shows the local coverage level of adaptive and non-adaptive conformal inference for the prediction of market volatility (see Section \ref{sec:stock_prediction}) for 8 additional stocks/indices.

\begin{figure}[H]
  \centering
  \includegraphics[scale=0.36]{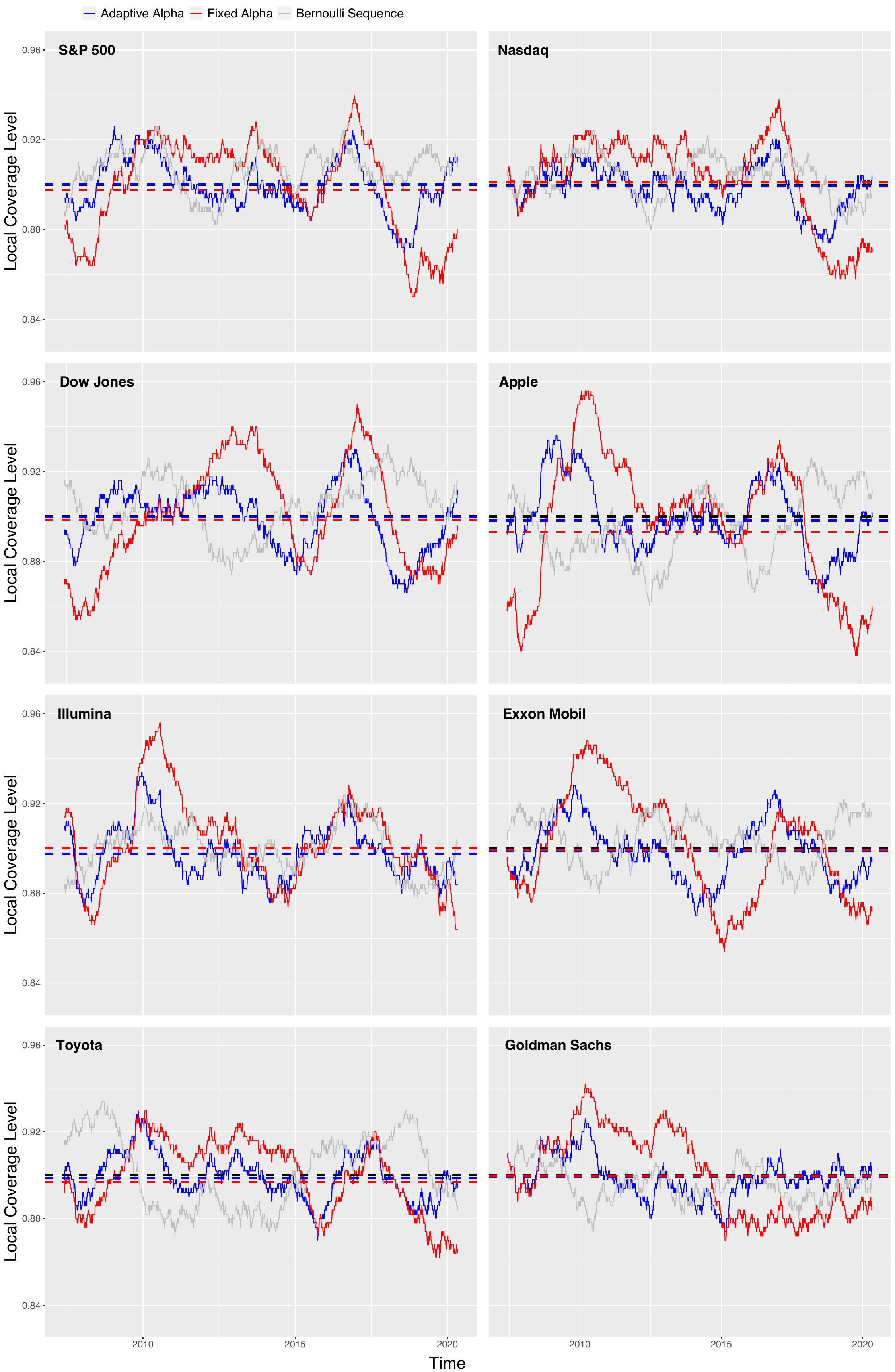}
 \caption{Local coverage frequencies for adaptive conformal (blue), a non-adaptive method that holds $\alpha_t = \alpha$ fixed (red), and an i.i.d. Bernoulli(0.1) sequence (grey) for the prediction of market volatility. The coloured dotted lines mark the average coverage obtained across all time points, while the black line indicates the target level of $1-\alpha=0.9$. }
    \label{fig:additional_stocks}
\end{figure}

\subsection{Existence of a stationary distribution for $\pmb{(\alpha_t,A_t)}$}\label{sec:MC_stationary_dist}

In this section we give one simple example in which the Markov chain $\{(\alpha_t,A_t)\}_{t \in \mmn}$ will have a unique stationary distribution. The setting considered here is the same as the one described in Section \ref{sec:MC_converage_bounds}.

Let $\alpha_t$ be initialized with $\alpha_1 \in \{\alpha + k\gamma \alpha : k \in \mmz\}$. Assume that $\mathcal{A}$ is finite. Let $P$ be the transition matrix of $\{A_t\}_{t \in \mmn}$ and assume that for all $a_1,a_2 \in \mathcal{A}$, $P_{a_1,a_2} = \mmp(A_{t+1} =a_2|A_t =a_1)  > 0$. Assume that $\alpha$ satisfies $\alpha^{-1}(1-\alpha) \in \mmn$. This will hold for common choices of $\alpha$ such as $\alpha = 0.1$ and $\alpha = 0.05$. Finally, assume that for all $a \in \mathcal{A}$ and all $p \in (0,1)$, $\mmp( S(X_t,Y_t) \leq \hat{Q}(p)|A = a) \in (0,1)$. This will occur for example when $S(X_t,Y_t)|A_t=a$ is supported on $\mathbb{R}$ and $\hat{Q}(\cdot)$ is finite valued for all $p \in (0,1)$. 

We claim that in this case $\{(\alpha_t,A_t)\}_{t \in \mmn}$ has a unique stationary distribution. To prove this it is sufficient to show that this chain is irreducible and has a finite state space. To check that it has a finite state space it is sufficient to show that $\alpha_t$ has a finite state space. We claim that with probability one we have that for all $t \in \mmn$, $\alpha_t \in \{\alpha + k\gamma \alpha : k \in \mmz\}$.  To prove this we proceed by induction. The base case is given by our choice of $\alpha_1$. For the inductive step note that
\begin{align*}
& \alpha_t \in \{\alpha + k\gamma \alpha : k \in \mmz\}\\
& \implies \alpha_{t+1} = \alpha_t + \gamma (\alpha - \text{err}_{t}) = \begin{cases}
\alpha_t + \gamma \alpha , \text{ if } \text{err}_t = 0,\\
\alpha_t - \gamma \alpha ( \alpha^{-1}(1-\alpha)), \text{ if } \text{err}_t = 1,
\end{cases}
 \in \{\alpha + k\gamma \alpha : k \in \mmz\} .
\end{align*}
Since $\alpha_t$ is also bounded (see Lemma \ref{lem:alphat_is_bounded}) this implies that $\alpha_t$ has a finite state space.

Finally, the fact that $\{(\alpha_t,A_t)\}_{t \in \mmn}$ is irreducible follows easily from our assumptions on $\{A_t\}_{t \in \mmn}$, $\hat{Q}(\cdot)$, and $\mmp( S(X_t,Y_t) \leq \hat{Q}(p)|A = a)$.

\subsection{Additional information for Section \ref{sec:election_forecasting}}

\subsubsection{Dataset description}\label{sec:election_dat}

The county-level demographic characteristics used for prediction were the proportion of the total population that fell into each of the following race categories (either alone or in combination): black or African American,  American Indian or Alaska Native, Asian, Native Hawaiian or other Pacific islander. In addition to this, we also used the proportion of the total population that was male, of Hispanic origin, and that fell within each of the age ranges 20-29, 30-44, 45-64, and 65+.  Demographic information was obtained from 2019 estimates published by the United States Census Bureau and available at \cite{censusDemos2019}. To supplement these demographic features we also used the median household income and the percentage of individuals with a bachelors degree or higher as covariates. Data on county-level median household incomes was based on 2019 estimates obtained from \cite{censusIncom}. The percentage of individuals with a bachelors degree or higher was computed based on data collected during the years 2015-2019 and published at \cite{censusEdu}. As an aside, we remark that we used 2019 estimates because this was the most recent year for which data was available. 

Vote counts for the 2016 election were obtained from \cite{electionsData}, while 2020 election data was taken from \cite{Leip2020}. In total, matching covariate and election vote count data were obtained for 3111 counties.

\subsubsection{Detailed prediction algorithm}\label{sec:cqr_alg}

Algorithm \ref{alg:wp_split_conformal_method} below outlines the core conformal inference method used to predict election results.  An R implementation of this algorithm as well as the core method outlined in Section \ref{sec:stock_prediction} can be found at \url{https://github.com/isgibbs/AdaptiveConformal}.

\begin{algorithm}\label{alg:wp_split_conformal_method}
 \KwData{Observed sequence of county-level votes counts and covariates $\{(X_t,Y_t)\}_{1 \leq t \leq T}$ and vote counts for the democratic candidate in the previous election $\{Y_t^{\text{prev}}\}_{1 \leq t \leq T}$.}
 \For{$t=1,2,\dots,T$}{
 	Compute the residual $r_t = (Y_t - Y_t^{\text{prev}})/Y_t^{\text{prev}}$
 }
 \For{$t = 501, 502,\dots,T$}{\tcp{We start making predictions once 500 counties have been observed.} 
 	Randomly split the data $\{(X_l,r_l)\}_{1 \leq l \leq t-1}$ into a training set $\mathcal{D}_{\text{train}}$ and a calibration set $\mathcal{D}_{\text{cal}}$ with $|\mathcal{D}_{\text{train}}| = \lfloor (t-1) \cdot 0.75 \rfloor$\;
 	{Fit a linear quantile regression model $\hat{q}(x;p)$ on $\mathcal{D}_{\text{train}}$}\;
 	\For{$(X_l,r_l) \in \mathcal{D}_{\text{cal}}$}{
 	 Compute the conformity score $S_l = \max\{\hat{q}(X_l;\alpha/2)-r_l,r_l-\hat{q}(X_l;1-\alpha/2)\}$\;
 	}
 	Define the quantile function $\hat{Q}_{t}(p) = \inf\left\{x : \left( \frac{1}{|\mathcal{D}_{\text{cal}}|} \sum_{(X_l,r_l) \in \mathcal{D}_{\text{cal}}} \bone_{S_l \leq x}  \right)  \geq p\right\}$\;
 	Return the prediction set $\hat{C}_{t}(\alpha) := \{y : \max\{\hat{q}(X_t;\alpha/2) - \frac{y-Y_t^{\text{prev}}}{Y_t^{\text{prev}} } ,  \frac{y-Y_t^{\text{prev}}}{Y_t^{\text{prev}}}  - \hat{q}(X_t;1-\alpha/2)\} \leq \hat{Q}_{t}(1-\alpha)\}$\;
 }
 \caption{\textit{CQR method for election night prediction}}
\end{algorithm}

\subsection{Large deviation bounds for the error sequence}\label{sec:large_deviation_proof}

In this section we prove Theorem \ref{thm:large_dev_err_bound}. So, throughout we define the sequences $\{\alpha_t\}_{t \in \mmn}$, $\{\text{err}_t\}_{t \in \mmn}$, and $\{A_t\}_{t \in \mmn}$ as in Section \ref{sec:MC_converage_bounds} and we assume that $\{(\alpha_t,A_t)\}_{t \in \mmn}$ is a stationary Markov chain from which it follows immediately that $\{(\alpha_t,A_t,\text{err}_t)\}_{t \in \mmn}$ is also stationary. Additionally, we assume that $\hat{Q}(\cdot)$ and $S(\cdot)$ are fixed functions such that $\hat{Q}(\cdot)$ is non-decreasing with $\hat{Q}(x) = - \infty$ for all $x<0$ and $\hat{Q}(x) = \infty$ for all $x>1$. The proof of Theorem \ref{thm:large_dev_err_bound} will rely on the following lemmas.

\begin{lemma}\label{lem:neg_corr_mon_fns}
Let $f:\mmr \to \mmr$ and $g:\mmr \to \mmr$ be bounded functions such that either 
\begin{enumerate}
\item 
$f$ is non-increasing and $g$ is non-decreasing,
\item
or $f$ is non-decreasing and $g$ is non-increasing.
\end{enumerate}
Then, for any random variable $Y$ 
\[
\mme[f(Y)g(Y)] \leq \mme[f(Y)]\mme[g(Y)].
\] 
\end{lemma}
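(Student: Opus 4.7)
The plan is to prove this as a version of the Chebyshev association (Harris–FKG) inequality by induction on $n$. First, I would note that case (2) reduces to case (1): if $f$ is non-decreasing and $g$ is non-increasing, then $(-f)$ is non-increasing and $(-g)$ is non-decreasing, so case (1) applied to the pair $(-f,-g)$ yields the same inequality after the two sign changes cancel. I will also implicitly assume that $Y_1,\dots,Y_n$ are independent, since without some such assumption the inequality is false (e.g.\ take $Y_1,Y_2$ with $\mmp(Y_1=1,Y_2=0)=\mmp(Y_1=0,Y_2=1)=1/2$, $f(y_1,y_2)=-y_1$, $g(y_1,y_2)=y_2$; then $\mme[fg]=0>-1/4=\mme[f]\mme[g]$).

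For the base case $n=1$, let $Y'$ be an independent copy of $Y$. Because $f$ and $g$ are monotone in opposite directions, the quantity $(f(y)-f(y'))(g(y)-g(y'))$ is non-positive for every pair $(y,y')$. Taking expectations and expanding gives
\[
0\;\geq\;\mme[(f(Y)-f(Y'))(g(Y)-g(Y'))]\;=\;2\mme[f(Y)g(Y)]-2\mme[f(Y)]\mme[g(Y)],
\]
which is the $n=1$ case.

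For the inductive step, suppose the result holds for $n-1$ arguments. For each fixed $(y_1,\dots,y_{n-1})$, the slice maps $y_n\mapsto f(y_1,\dots,y_n)$ and $y_n\mapsto g(y_1,\dots,y_n)$ are respectively non-increasing and non-decreasing, so the base case applied conditionally on $Y_1,\dots,Y_{n-1}$ (using independence of $Y_n$) yields
\[
\mme\!\left[f(Y_1,\dots,Y_n)g(Y_1,\dots,Y_n)\,\big|\,Y_1,\dots,Y_{n-1}\right]\;\leq\;F(Y_1,\dots,Y_{n-1})\,G(Y_1,\dots,Y_{n-1}),
\]
where $F(y_1,\dots,y_{n-1}):=\mme[f(y_1,\dots,y_{n-1},Y_n)]$ and $G$ is defined analogously. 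Because integrating $Y_n$ against a fixed marginal preserves coordinatewise monotonicity in the remaining variables, $F$ is non-increasing and $G$ is non-decreasing in each of their $n-1$ arguments. The inductive hypothesis applied to $F,G$ then gives $\mme[FG]\leq \mme[F]\mme[G]=\mme[f]\mme[g]$, and combining with the tower property closes the induction.

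The main obstacle I expect is the implicit independence hypothesis: the step where one invokes the base case inside a conditional expectation genuinely needs the conditional law of $Y_n$ given $Y_1,\dots,Y_{n-1}$ to equal its marginal. In the Appendix's application, the ``random variables'' must really be an i.i.d.\ noise representation of the underlying process (e.g.\ uniform draws generating the $\text{err}_t$'s through $\alpha_t$), with the monotonicity hypotheses transferred onto the functions of those noise variables. Once this re-expression is in place, the induction above is essentially routine; the subtlety lies entirely in phrasing the application so that genuine independence is available.
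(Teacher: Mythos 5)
Your proof is essentially the paper's: the same induction on $n$ via conditioning on $Y_1,\dots,Y_{n-1}$, with only the base case handled differently (you symmetrize against an independent copy $Y'$, while the paper compares $g$ to the threshold values $g^L \geq g^U$ on the events $\{f \leq \mme[f]\}$ and $\{f > \mme[f]\}$ — both are standard one-dimensional Chebyshev-association arguments). You are also correct that independence must be added as a hypothesis: your two-point counterexample shows the lemma is false as literally stated for arbitrary $Y_1,\dots,Y_n$, and the paper's own inductive step uses independence silently at exactly the spot you identify, namely when asserting that $\tilde f = \mme[f(Y_1,\dots,Y_n)\mid Y_1,\dots,Y_{n-1}]$ and $\tilde g$ inherit coordinatewise monotonicity, which requires the conditional law of $Y_n$ given $Y_1,\dots,Y_{n-1}$ not to vary with the conditioning variables.
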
 
\noindent The proof of this result is straightforward and can be found in Section \ref{sec:technical_lemmas}.

\begin{lemma}\label{lem:removing_alphat_cor}
For any $\lambda \in \mmr$ and $t \in \mmn$ we have that 
\begin{equation}\label{eq:inductive_statement}
\mme\left[\prod_{s=1}^{t} \exp(\lambda(\textup{err}_s - \mme[\textup{err}_s|A_s]))\right] \leq  \exp(\lambda^2/2) \mme\left[\prod_{s=1}^{t-1} \exp(\lambda(\textup{err}_s - \mme[\textup{err}_s|A_s]))\right].
\end{equation}
\end{lemma}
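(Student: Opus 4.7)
Set $Z_s := \textup{err}_s - \mme[\textup{err}_s \mid A_s]$, write $q_a := \mme[\textup{err}_t \mid A_t = a]$ (well-defined as a function of $a$ because $(\alpha_t, A_t)$ is stationary), and put $W_{t-1} := \prod_{s=1}^{t-1} e^{\lambda Z_s}$. The target inequality \eqref{eq:inductive_statement} then reads $\mme[W_{t-1}\, e^{\lambda Z_t}] \le e^{\lambda^2/2}\, \mme[W_{t-1}]$. The strategy has two ingredients: a Hoeffding bound to control the one-step MGF of $Z_t$, and Lemma \ref{lem:neg_corr_mon_fns} applied to the noise driving the past errors to show that multiplying by $W_{t-1}$ cannot inflate that MGF. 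The constant $e^{\lambda^2/2}$ is exactly what Hoeffding yields: by stationarity $\textup{err}_t \mid A_t \sim \textup{Bern}(q_{A_t})$, so $Z_t \mid A_t$ is mean zero and bounded in $[-1,1]$, and Hoeffding applied with range $2$ gives $\mme[e^{\lambda Z_t} \mid A_t] \le e^{\lambda^2/2}$ almost surely.

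For the negative-correlation step I would condition on the whole Markov trajectory $\{A_s\}_{s \ge 1}$ together with the time-$t$ data $(X_t, Y_t)$. Under this conditioning the noise variables $\xi_s := (X_s, Y_s)$ for $s < t$ are mutually independent with $\xi_s \sim P_{A_s}$, and $W_{t-1}$, $\alpha_t$ and $\textup{err}_t$ are deterministic functionals of them. Parametrising $\xi_s$ by its conformity score $S(X_s,Y_s)$, a cascade analysis of the update $\alpha_{s+1} = \alpha_s + \gamma(\alpha - \textup{err}_s)$ shows that increasing a single $\xi_s$ either leaves everything unchanged or flips $\textup{err}_s$ from $0$ to $1$; the induced $-\gamma$ perturbation of the $\alpha$-trajectory then propagates forward until either some later $\textup{err}_{r^*}$ flips from $1$ to $0$ (at which point every subsequent $\alpha$-value returns to its original value, producing a net multiplicative change of $e^{\lambda}\cdot e^{-\lambda}=1$ in $W_{t-1}$ and no change in $\alpha_t$), or time $t$ is reached without cancellation (in which case $W_{t-1}$ is multiplied by $e^{\lambda}$ and $\alpha_t$ drops by $\gamma$). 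Hence, for $\lambda \ge 0$, $W_{t-1}$ is coordinate-wise non-decreasing in $\xi_1,\ldots,\xi_{t-1}$ while $\alpha_t$, and therefore $e^{\lambda \textup{err}_t}$ with $(X_t,Y_t)$ held fixed, is coordinate-wise non-increasing; the case $\lambda < 0$ is symmetric. Applying Lemma \ref{lem:neg_corr_mon_fns} to the independent $\xi_1,\ldots,\xi_{t-1}$ gives
\[
\mme\bigl[W_{t-1}\, e^{\lambda \textup{err}_t} \bigm| \{A_s\},(X_t,Y_t)\bigr] \le \mme\bigl[W_{t-1} \bigm| \{A_s\}\bigr]\cdot\mme\bigl[e^{\lambda \textup{err}_t} \bigm| \{A_s\},(X_t,Y_t)\bigr],
\]
and multiplying by the $A_t$-measurable factor $e^{-\lambda q_{A_t}}$ and then integrating out $(X_t,Y_t)$ produces the analogous inequality for $e^{\lambda Z_t}$ under conditioning on $\{A_s\}$ alone.

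The main obstacle is transferring this chain-conditional negative correlation back to the unconditional bound with the clean constant $e^{\lambda^2/2}$. Conditional on $\{A_s\}$ only, $\textup{err}_t$ is Bernoulli with parameter $p' := \mme[M_{A_t}(\alpha_t) \mid \{A_s\}]$, which typically differs from $q_{A_t}$, so Hoeffding under this coarser conditioning introduces a residual factor $e^{\lambda(p' - q_{A_t})}$ that the chain-conditional FKG does not suppress. I would absorb it by taking a further tower expectation $\mme[\,\cdot \mid A_t]$ and applying Hoeffding a second time to the bounded zero-mean random variable $p' - q_{A_t}$ (mean zero by stationarity; range at most $1$ since $p',q_{A_t}\in[0,1]$). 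The two Hoeffding applications contribute factors of at most $e^{\lambda^2/8}$ apiece, summing to $e^{\lambda^2/4}$ and thus fitting comfortably inside the $e^{\lambda^2/2}$ budget. Taking the outer expectation over $A_t$ together with the tower identity $\mme[\mme[W_{t-1}\mid\{A_s\}]\mid A_t] = \mme[W_{t-1}\mid A_t]$ then delivers the claimed inequality.
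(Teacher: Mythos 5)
Your overall strategy is the same as the paper's: combine Lemma \ref{lem:neg_corr_mon_fns} with Hoeffding's lemma, exploiting the monotone dependence of $\alpha_t$ on the error history. Your implementation of the monotonicity step is more granular than the paper's: the paper conditions on $(\alpha_1,A_1,\dots,A_t)$ and applies Lemma \ref{lem:neg_corr_mon_fns} directly to $f(\textup{err}_1,\dots,\textup{err}_{t-1})=\prod_{s<t}e^{\lambda(\textup{err}_s-\mme[\textup{err}_s|A_s])}$ and $g(\textup{err}_1,\dots,\textup{err}_{t-1})=\mme\bigl[e^{\lambda(\textup{err}_t-\mme[\textup{err}_t|A_t])}\mid\textup{err}_1,\dots,\textup{err}_{t-1}\bigr]$, whereas you push the argument down to the conditionally independent inputs $(X_s,Y_s)$. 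Your cascade analysis of how a single score perturbation propagates through the update (either cancelled by a later $1\to 0$ flip or carried to time $t$ as a $-\gamma$ shift of $\alpha_t$) is correct, and it is a legitimate --- arguably cleaner --- way to establish the required monotonicity, since the independence needed in the proof of Lemma \ref{lem:neg_corr_mon_fns} is then manifest.

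The gap is in your final paragraph. After the negative-correlation step you are left with a bound of the form $\mme\bigl[\mme[W_{t-1}\mid\{A_s\}]\,e^{\lambda(p'-q_{A_t})}\bigr]e^{\lambda^2/8}$, where $p'=\mme[\textup{err}_t\mid\{A_s\}]$ and $q_{A_t}=\mme[\textup{err}_t\mid A_t]$, and you propose to kill the drift term by ``applying Hoeffding a second time'' to $p'-q_{A_t}$ under $\mme[\cdot\mid A_t]$. Hoeffding's lemma controls the moment generating function of a single centered bounded variable; it does not let you bound $\mme[V e^{\lambda X}\mid A_t]$ by $\mme[V\mid A_t]\,e^{\lambda^2/8}$ when $V=\mme[W_{t-1}\mid\{A_s\}]$ and $X=p'-q_{A_t}$ are both functionals of the same trajectory and hence correlated --- and there is no obvious sign for this correlation (a trajectory that lingers in high-variability states plausibly inflates both factors at once). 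Your tower identity only applies after the exponential factor has been separated out, which is exactly what remains to be proved. For reference, the paper does not take this detour: it keeps the conditioning on $(\alpha_1,A_1,\dots,A_t)$ throughout and bounds $\mme\bigl[e^{\lambda(\textup{err}_t-\mme[\textup{err}_t|A_t])}\mid\alpha_1,A_1,\dots,A_t\bigr]$ by $e^{\lambda^2/2}$ in one application of Lemma \ref{lem:Hoef_lem}; note that this invocation itself presumes the centering identity whose possible failure motivated your detour, so if you want a fully airtight argument you need either to justify that identity or to supply a genuine decoupling argument for the two trajectory-dependent factors.
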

\begin{proof}
By conditioning on $\alpha_1$ and $A_1,\dots,A_t$ on both the left and right-hand side of (\ref{eq:inductive_statement}) we may view these quantities as fixed. Thus, while for readability we do not denote this conditioning explicitly, the following calculations should be read as conditional on $\alpha_1$ and $A_1,\dots,A_t$. Then, 
\begin{align*}
& \mme\left[\prod_{s=1}^{t} \exp(\lambda(\text{err}_s - \mme[\text{err}_s|A_s]))\right]\\
& \ \ \ \ \ \  = \mme\left[\prod_{s=1}^{t-1} \exp(\lambda(\text{err}_s - \mme[\text{err}_s|A_s]))\mme\bigg[\exp(\lambda(\text{err}_t -  \mme[\text{err}_t|A_t])) \bigg| \text{err}_1,\dots,\text{err}_{t-1} \bigg] \right].
\end{align*}
Recall that $\alpha_t = \alpha_1 + \gamma\sum_{s=1}^{t-1}(\alpha-\text{err}_s)$ is a deterministic function of $\alpha_1$  and $\sum_{s=1}^{t-1} \text{err}_s$. So, we may define the functions $f(\sum_{s=1}^{t-1} \text{err}_s) = \prod_{s=1}^{t-1} \exp(\lambda(\text{err}_s - \mme[\text{err}_s|A_s])) $ and 
\begin{align*}
 g(\sum_{s=1}^{t-1} \text{err}_s) & := \mme\left[\exp(\lambda(\text{err}_{t} - \mme[\text{err}_{t}|A_t]))| \text{err}_1,\dots,\text{err}_{t-1}\right]\\
& =   P_{A_{t}}(S(X_t,Y_t) \leq \hat{Q}(1-\alpha_{t}))\exp(-\lambda \mme[\text{err}_t|A_t])\\
& \ \ \ \ + (1-P_{A_{t}}(S(X_t,Y_t) \leq \hat{Q}(1-\alpha_{t}))) \exp(\lambda(1- \mme[\text{err}_t|A_t])),
\end{align*}
where we emphasize that on the last line $A_t$ and $\alpha_t$ should be viewed as fixed quantities. Now, since $\alpha_{t}$ is monotonically decreasing in $\sum_{s=1}^{t-1} \text{err}_s$ it should be clear that if $\lambda \geq 0$ then $g$ is non-increasing (resp. non-decreasing for $\lambda < 0$) and $f$ is non-decreasing (resp. non-increasing for $\lambda < 0$). So, by Lemma \ref{lem:neg_corr_mon_fns} we have that 
\begin{align*}
& \mme\left[\prod_{s=1}^{t-1} \exp(\lambda(\text{err}_s - \mme[\text{err}_s|A_s]))\mme\bigg[\exp(\lambda(\text{err}_t -  \mme[\text{err}_t|A_t])) \bigg| \text{err}_1,\dots,\text{err}_{t-1} \bigg] \right] \\
& \leq \mme\left[\prod_{s=1}^{t-1} \exp(\lambda(\text{err}_s - \mme[\text{err}_s|A_s])) \right] \mme\left[\mme\bigg[\exp(\lambda(\text{err}_t -  \mme[\text{err}_t|A_t])) \bigg| \text{err}_1,\dots,\text{err}_{t-1} \bigg] \right] \\
& = \mme\left[\prod_{s=1}^{t-1} \exp(\lambda(\text{err}_s - \mme[\text{err}_s|A_s])) \right]  \mme[\exp(\lambda(\text{err}_t -  \mme[\text{err}_t|A_t]))]\\
& \leq \mme\left[\prod_{s=1}^{t-1} \exp(\lambda(\text{err}_s - \mme[\text{err}_s|A_s])) \right] \exp(\lambda^2/2),
\end{align*}
where the last inequality follows by Hoeffding's lemma (see Lemma \ref{lem:Hoef_lem}). 
\end{proof}

The final result we will need in order to prove Theorem \ref{thm:large_dev_err_bound} is a large deviation bound for Markov chains.

\begin{definition}\label{def:absolution_spectral_gap}
Let $\{X_i\}_{i \in \mmn} \subseteq \mathcal{X}$ be a Markov chain with transition kernel $P$ and stationary distribution $\pi$. Define the inner product space 
\[
L^2(\mathcal{X},\pi) = \left\{h : \int_{\mathcal{X}} h(x)^2\pi(dx) < \infty\right\}
\]
with inner product
\[
\langle h_1,h_2 \rangle_{\pi} = \int _{\mathcal{X}} h_1(x)h_2(x) \pi(dx).
\]
For any $h \in L^2(\mathcal{X},\pi)$, let
\[
L^2(\mathcal{X},\pi) \ni Ph := \int h(y) P(\cdot,dy).
\]
Then, we say that $\{X_i\}_{i \in \mmn}$ has non-zero absolute spectral gap $1-\eta$ if
\[
\eta := \sup\left\{ \sqrt{\langle Ph,Ph \rangle_{\pi}} : \langle h,h \rangle_{\pi} = 1,\ \int_{\mathcal{X}} h(x)\pi(dx) = 0\right\} < 1.
\]
\end{definition}

\begin{theorem}[Theorem 1 in \cite{Jiang2020}]\label{thm:MC_boud_with_var}
Let $\{X_i\}_{i  \in \mmn} \subseteq \mathcal{X}$ be a stationary Markov chain with invariant distribution $\pi$ and non-zero absolute spectral gap $1-\eta > 0$. Let $f_i:\mathcal{X} \to [-C,C]$ be a sequence of functions with $\pi(f_i) = 0$ and define $\sigma^2 := \sum_{i=1}^n \pi(f_i^2)/n$. Then, for $\forall \epsilon > 0$,
\[
\mmp_{\pi}\left(\frac{1}{n} \sum_{i=1}^n f_i(X_i) \geq \epsilon\right) \leq \exp\left( -  \frac{n(1-\eta)\epsilon^2/2}{(1+\eta)\sigma^2 + 5C\epsilon} \right).
\]
\end{theorem}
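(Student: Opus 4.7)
The plan is to split the centered empirical miscoverage rate into a martingale-like piece and a slowly-drifting environment piece:
$$\frac{1}{T}\sum_{t=1}^T \textup{err}_t - \alpha = \underbrace{\frac{1}{T}\sum_{t=1}^T \bigl(\textup{err}_t - \mme[\textup{err}_t|A_t]\bigr)}_{=: U_T} + \underbrace{\frac{1}{T}\sum_{t=1}^T \bigl(\mme[\textup{err}_t|A_t] - \alpha\bigr)}_{=: V_T}.$$
Stationarity of $\{(\alpha_t,A_t,\textup{err}_t)\}$, combined with Proposition \ref{prop:bound_on_errs}, gives $\mme[\textup{err}_t] = \alpha$, so both $U_T$ and $V_T$ are centered. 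A union bound at threshold $\epsilon/2$ then yields the two-term structure appearing in \eqref{eq:large_dev_bound}: the first term of the theorem will come from $U_T$ and the second from $V_T$.

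For $U_T$, the strategy is to iterate Lemma \ref{lem:removing_alphat_cor} in $t$ to obtain the sub-Gaussian moment generating function bound
$$\mme\left[\exp\left(\lambda \sum_{t=1}^T (\textup{err}_t - \mme[\textup{err}_t|A_t])\right)\right] \leq \exp(T\lambda^2/2),$$
which is exactly what one would get for a sum of $T$ i.i.d.\ centered Bernoulli variables. A standard Chernoff/Markov bound on $\{U_T \geq \epsilon/2\}$, optimized at $\lambda = \epsilon/2$, gives $\exp(-T\epsilon^2/8)$; a symmetric argument for $-U_T$ and a union bound produce the first term of \eqref{eq:large_dev_bound}.

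For $V_T$, the plan is to invoke Theorem \ref{thm:MC_boud_with_var} (the Markov-chain Bernstein inequality) applied to the single function $f(a) := \mme[\textup{err}_t | A_t = a] - \alpha$, noting that stationarity makes this function not depend on $t$. By construction $\sup_a |f(a)| \leq B$, $\pi(f) = \mme[\textup{err}_t] - \alpha = 0$, and $\pi(f^2) = \sigma^2_B$. Plugging $C = B$, $\sigma^2 = \sigma^2_B$, and threshold $\epsilon/2$ into Theorem \ref{thm:MC_boud_with_var}, then doubling for the two-sided event, yields a bound of the form $2\exp(-T(1-\eta)\epsilon^2/[8(1+\eta)\sigma^2_B + c B \epsilon])$, which matches the stated Bernstein term up to tracking of absolute constants.

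The main obstacle is controlling $U_T$: although $\{A_t\}$ is Markov, $\textup{err}_t$ depends on $\alpha_t$, which in turn depends on \emph{every} past error $\textup{err}_1,\ldots,\textup{err}_{t-1}$, so $U_T$ is not a martingale in any obvious filtration and the naive approach of applying an off-the-shelf concentration inequality fails. This is precisely what Lemma \ref{lem:removing_alphat_cor} is designed to handle: after conditioning on $A_1,\ldots,A_T$, the variable $\alpha_t$ becomes a deterministic \emph{non-increasing} function of the past errors, so the conditional MGF factor at step $t$ is monotone in $(\textup{err}_1,\ldots,\textup{err}_{t-1})$ with the opposite sign to the preceding product of factors. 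The FKG-type inequality (Lemma \ref{lem:neg_corr_mon_fns}) then decorrelates the two, and Hoeffding's lemma supplies the sub-Gaussian per-step factor $\exp(\lambda^2/2)$. Once this negative-correlation step is in place, the remainder of the proof is a routine assembly of Chernoff bounds, a union bound at threshold $\epsilon/2$, and the black-box Markov-chain Bernstein inequality.
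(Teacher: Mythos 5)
Your proposal does not address the statement at hand. The result labelled \ref{thm:MC_boud_with_var} is a Bernstein-type concentration inequality for a stationary Markov chain with a spectral gap, and the paper does not prove it: it is imported verbatim as Theorem~1 of \cite{Jiang2020}. What you have written is instead a proof sketch for Theorem~\ref{thm:large_dev_err_bound} --- the large deviation bound on $\frac{1}{T}\sum_t \textup{err}_t$ --- which is the theorem that \emph{invokes} \ref{thm:MC_boud_with_var} as a black box for the $V_T$ piece. You never engage with the object being asked about (the chain $\{X_i\}$, the functions $f_i$, the spectral-gap quantity $\eta$, the specific bound $\exp(-n(1-\eta)\epsilon^2/2/[(1+\eta)\sigma^2 + 5C\epsilon])$); you simply call it as a subroutine. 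That is a category error, not a proof.

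For what it is worth, as an argument for \ref{thm:large_dev_err_bound}, your sketch faithfully reproduces the paper's own reasoning in Section~\ref{sec:large_deviation_proof} --- the same two-term decomposition into $U_T$ and $V_T$, the same conditioning on $A_1,\dots,A_T$ to make $\alpha_t$ a deterministic non-increasing function of past errors, the same use of the FKG-type decorrelation in Lemma~\ref{lem:neg_corr_mon_fns} together with Hoeffding's lemma inside Lemma~\ref{lem:removing_alphat_cor}, and the same Chernoff optimization at $\lambda = \epsilon/2$. But none of that constitutes a proof of the Markov-chain Bernstein inequality itself. To prove \ref{thm:MC_boud_with_var} one would need a genuinely different toolkit (operator-theoretic bounds on the transition kernel $P$ acting on $L^2(\pi)$, a Cram\'{e}r-type MGF estimate exploiting the spectral gap, and an optimization yielding a ratio of variance to range), which is the content of \cite{Jiang2020} and is nowhere in your proposal.
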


Finally, we are ready to prove Theorem \ref{thm:large_dev_err_bound}.

\begin{proof}[Proof of Theorem \ref{thm:large_dev_err_bound}.] Write
\begin{align}
& \mmp\left( \left| \frac{1}{T} \sum_{t=1}^T \text{err}_t - \alpha \right| >\epsilon \right)\\
& \leq \mmp\left( \left| \frac{1}{T} \sum_{t=1}^T \text{err}_t - \mme[\text{err}_t|A_t] \right| >\epsilon/2 \right) + \mmp\left( \left| \frac{1}{T} \sum_{t=1}^T \mme[\text{err}_t|A_t] - \alpha \right| >\epsilon/2 \right) \label{eq:prob_decomp}
\end{align}
By applying lemma \ref{lem:removing_alphat_cor} inductively we have that for all $\lambda > 0$,
\begin{align*}
\mmp\left( \frac{1}{T} \sum_{t=1}^T \text{err}_t - \mme[\text{err}_t|A_t]  >\epsilon/2 \right) & \leq \exp(-T\lambda \epsilon/2) \mme\left[ \prod_{t=1}^T \exp(\lambda( \text{err}_t - \mme[\text{err}_t|A_t] )) \right]\\
& \leq    \exp(-T\lambda \epsilon/2) \exp(T\lambda^2/2),
\end{align*}
with an identical bound on the left tail. Choosing $\lambda = \epsilon/2$ gives the bound
\[
\mmp\left( \left| \frac{1}{T} \sum_{t=1}^T \text{err}_t - \mme[\text{err}_t|A_t] \right| >\epsilon/2 \right)  \leq 2 \exp(-T\epsilon^2/8).
\]
On the other hand, the second term in (\ref{eq:prob_decomp}) can be bounded directly using Theorem \ref{thm:MC_boud_with_var}.
\end{proof}

\subsection{Approximate marginal coverage}\label{sec:approx_marginal_proof}

In this section we prove Theorem \ref{thm:reg_bound}.

\begin{proof}[Proof of Theorem \ref{thm:reg_bound}]
Our proof follows similar steps to those presented in previous works on stochastic gradient descent under distribution shift \cite{Cao2019, Zhu2016}. First, note that
\[
(\alpha_{t+1} - \alpha^*_{A_t})^2 = (\alpha_{t} - \alpha^*_{A_t})^2 + 2\gamma(\alpha-\text{err}_t)(\alpha_{t} - \alpha^*_{A_t}) + \gamma^2(\alpha-\text{err}_t)^2.
\]
Now recalling that $M(\alpha^*_{A_t}|A_t) = \alpha$, we find that 
\begin{align*}
  -\mme[(\alpha-\text{err}_t)(\alpha_{t} - \alpha^*_{A_t})]  & = \mme[\mme[(\text{err}_t - \alpha)(\alpha_{t} - \alpha^*_{A_t})|A_t,\alpha_t]]\\
 & = \mme[(M(\alpha_{t}|A_t) - M(\alpha^*_{A_t}|A_t))(\alpha_t - \alpha^*_{A_t})]\\
 &   \geq \frac{1}{L} \mme[(M(\alpha_t|A_t) - M(\alpha^*_{A_t}|A_t))^2]\\
 &= \frac{1}{L} \mme[(M(\alpha_t|A_t) - \alpha)^2].
\end{align*}
Thus it follows that
\begin{align*}
& 2\gamma L^{-1} \sum_{t=1}^T \mme[(M(\alpha_t|A_t) - \alpha)^2]  \leq \sum_{t=1}^T  \mme[(\alpha_{t} - \alpha^*_{A_t})^2 -  (\alpha_{t+1} - \alpha^*_{A_t})^2   + \gamma^2(\alpha-\text{err}_t)^2] \\
& \leq \sum_{t=1}^T\mme[(\alpha_{t+1} - \alpha^*_{A_{t+1}})^2 -  (\alpha_{t+1} - \alpha^*_{A_{t}})^2] + \mme[(\alpha_{1} - \alpha^*_{A_{1}})^2]   + \gamma^2T\\
& \leq  \sum_{t=1}^T\mme[2\alpha_{t+1}(\alpha^*_{A_{t}} -  \alpha^*_{A_{t+1}} ) ]  +  (\alpha^*_{A_{T+1}})^2  + \mme[(\alpha_{1} - \alpha^*_{A_{1}})^2]   + \gamma^2T\\
& \leq \sum_{t=1}^T 2(1+\gamma) \mme[|\alpha^*_{A_{t+1}} - \alpha^*_{A_t}|] +  (\alpha^*_{A_{T+1}})^2  + \mme[(\alpha_{1} - \alpha^*_{A_{1}})^2]  + \gamma^2T,
\end{align*}
where the last inequality follows from Lemma \ref{lem:alphat_is_bounded}. So, re-arranging we get the inequality
\begin{align*}
& \frac{1}{T} \sum_{t=1}^T \mme[(M(\alpha_t|A_t) - \alpha)^2]\\
& \leq \frac{L}{2T\gamma} \left(\sum_{t=1}^T 2(1+\gamma) \mme[|\alpha^*_{A_{t+1}} - \alpha^*_{A_t}|] +  (\alpha^*_{A_{T+1}})^2  + \mme[(\alpha_{1} - \alpha^*_{A_{1}})^2]  + \gamma^2T \right).
\end{align*}
Finally, since $\{(A_t,\alpha_t)\}_{t \in \mmn}$ is stationary we may let $T \to \infty$ to get that
\[
\mme[(M(\alpha_t|A_t) - \alpha)^2] \leq \frac{L(1+\gamma)}{\gamma} \mme[|\alpha^*_{A_{t+1}} - \alpha^*_{A_t}|] + \frac{L\gamma}{2}
\]
as claimed.
\end{proof}

\subsection{Bounds on $\pmb{B}$  and $\pmb{\sigma^2_B}$}\label{sec:bias_bound}

In this section we bound the constants $B$ and $\sigma^2_B$ appearing in the statement of Theorem
\ref{thm:large_dev_err_bound}. Let
\begin{align*}
& \epsilon_1 = \sup_{k \in \{0,1,2,\dots\}} \sup_{a \in \mathcal{A}} \mme[|\alpha^*_{A_{T-k}} - \alpha^*_{A_{T-k-1}}||A_T = a]\\
\text{ and } & \epsilon_2 = \sup_{k \in \{0,1,2,\dots\}} \sup_{a \in \mathcal{A}} \mme[(\alpha^*_{A_{T-k}} - \alpha^*_{A_{T-k-1}})^2|A_T = a].
\end{align*}
Then, our main result is Lemma \ref{lem:bound_on_mean_given_a} which shows that 
\begin{equation}\label{eq:bound_on_bias}
B \leq C(\gamma + \gamma^{-1}(\epsilon_1 + \epsilon_2)) \ \ \ \text { and } \ \ \ \sigma^2_B \leq B^2,
\end{equation}
where the constant $C$ depends on how close $M(\cdot|a)$ is to the ideal linear function $M(p|a) = p$. \\

Plugging (\ref{eq:bound_on_bias}) into Theorem \ref{thm:large_dev_err_bound} gives a concentration inequality for $|T^{-1}\sum_{t=1}^T \text{err}_t - \alpha|$. In particular, suppose we use an optimal stepsize of $\gamma \propto \sqrt{\epsilon_1}$. Then, combining (\ref{eq:bound_on_bias}) with Theorem \ref{thm:large_dev_err_bound} roughly tells us that
\begin{equation}\label{eq:thm_bound}
 \left|\frac{1}{T}\sum_{t=1}^T \text{err}_t - \alpha\right| \leq  O\left(\max\left\{\frac{1}{\sqrt{T}},\frac{\sqrt{\epsilon_1}}{\sqrt{T(1-\eta)}}, \frac{\sqrt{\epsilon_1}}{T(1-\eta)}\right\} \right) .
\end{equation}
As a comparison it may be instructive to note that the more naive bound given in Proposition \ref{prop:bound_on_errs} can be written as
\begin{equation}\label{eq:prop_bound}
 \left|\frac{1}{T}\sum_{t=1}^T \text{err}_t - \alpha\right| \leq O\left( \frac{1}{T\gamma}\right) =  O\left( \frac{1}{T\sqrt{\epsilon_1}}\right) .
\end{equation}
A sharp reader may notice that the naive bound given in (\ref{eq:prop_bound}) actually goes to 0 faster in $T$ than the HMM-based bound shown in (\ref{eq:thm_bound}). While this is true, the bound  (\ref{eq:prop_bound}) has the highly undesirable property of increasing in $1/\sqrt{\epsilon_1}$, i.e. the bound increases as the size of the distribution shift decreases. On the other hand, the HMM-based bound  has the more intuitive property of decreasing with the distribution shift. 

The only remaining issue is to determine the size of $\sqrt{\epsilon_1/(1-\eta)}$. To provide some insight into this quantity note that there are two main regimes in which we expect
$|T^{-1} \sum_{t=1}^T\text{err}_t - \alpha|$ to be small:
\begin{enumerate}
\item
Environments in which the state $A_t$ changes frequently, but $|\alpha^*_{A_{t+1}} -  \alpha^*_{A_{t}}|$ is always small. In this case it is reasonable to expect $1-\eta$ to not be too small and so we anticipate that (\ref{eq:thm_bound}) will give a reasonable bound. 
\item
Environments in which the state changes very infrequently. In this case $\{A_t\}_{t \in \mmn}$ will mix slowly and so we expect $1-\eta$ to be quite small. Additionally, we also have that $\alpha^*_{A_{t+1}} = \alpha^*_{A_t}$ a large proportion of the time and thus $\epsilon_1$ will also be small. As a result, it is not immediately clear what (\ref{eq:thm_bound}) tells us about $|T^{-1} \sum_{t=1}^T\text{err}_t - \alpha|$. Below we give one simple example that demonstrates that (\ref{eq:thm_bound}) can also be a reasonable bound in this instance.
\end{enumerate}

\begin{example}
Let $\{A_t\}_{t \in \mmn}$ be the Markov chain with states $\{1,\dots,n\}$ and transition matrix
\[
P = \left(p - \frac{1-p}{n-1}\right) I + \frac{1-p}{n-1} 11^T,
\]
where $p \in [0,1]$ is taken to be very close to 1. Let $\Delta := \max_{i \neq j} |\alpha^*_i - \alpha^*_j|$. Then, we have that $\epsilon_1,\epsilon_2 \leq \Delta(1-p)$. Moreover, note that this chain has spectral gap $1-\eta \cong 1-p$. Thus, (\ref{eq:thm_bound}) simplifies to
\begin{equation}\label{eq:simplified_upper_bound}
 \left|\frac{1}{T}\sum_{t=1}^T \text{err}_t - \alpha\right| \leq  O\left(\max\left\{\frac{1}{\sqrt{T}}, \frac{\sqrt{\Delta}}{T\sqrt{1-p}}\right\} \right) .
\end{equation}
In particular, for $T > \Delta/(1-p)$ we find that the error sequence concentrates at a rate of $O(1/\sqrt{T})$, which is consistent with the behaviour of an i.i.d. Bernoulli sequence. Finally, to understand this restriction on $T$ note that given a starting state $j \in \{1,\dots,n\}$ we expect $\alpha_t$ to contract towards $\alpha_{j}^*$ at a rate of $(1-\gamma)$ and therefore to have that
\[
\frac{1}{T} \sum_{t=1}^T |\alpha_t - \alpha^*_{j}| \propto \frac{1}{T} \sum_{t=1}^T (1-\gamma)^{t-1} |\alpha_1 - \alpha^*_{j}| \leq \frac{\Delta}{T\gamma} =   \frac{\sqrt{\Delta}}{T\sqrt{1-p}} ,
\]
where here we assumed that $\alpha_1 \in \{\alpha^*_1,\dots,\alpha^*_n\}$. Thus, the second term in the maximum in (\ref{eq:simplified_upper_bound}) can be seen as accounting for the rate of covergence of $\alpha_t$ to $\alpha^*_{j}$ during the time that the Markov chain is in  state $j$ and given that $\alpha_1$ starts at some $\alpha^*_i$, $1 \leq i \leq n$.

\end{example}

We now derive (\ref{eq:bound_on_bias}).

\begin{lemma}\label{lem:bound_on_alpha_approx_given_state}
Assume that $\exists 0<c<1/(2\gamma)$ such that for all $a \in \mathcal{A}$ and all $p \in [-\gamma,1+\gamma]$,
\[
|M(p|a) - M(\alpha^*_a|a)| \geq c|p-\alpha^*_a|.
\]
Then, for all $a \in \mathcal{A}$, $k \in \{0,1,2,\dots\}$, and $T \in \mmn$
\begin{align*}
& \mme[(\alpha_1 - \alpha^*_{A_1})^2|A_{1+k} = a]\\
 & \leq (1-2c\gamma)^{T-1} \mme[(\alpha_1 - \alpha^*_{A_1})^2|A_{T+k} = a] \\
&   +  \sum_{t=2}^{T} (1-2c\gamma)^{T-t} \bigg(\gamma^2 +  2(1+\gamma)\mme[|\alpha^*_{A_{t-1}} - \alpha^*_{A_t}||A_{T+k} = a] + \mme[(\alpha^*_{A_{t-1}} - \alpha^*_{A_t})^2|A_{T+k} = a]\bigg).
\end{align*}
Furthermore, if we assume that $\forall a \in \mathcal{A}$ and $k \in \{0,1,2,\dots\}$,
\[
\mme[|\alpha^*_{A_{t-k}} - \alpha^*_{A_{t-k-1}}||A_t = a] \leq \epsilon_1 \text{ and } \mme[(\alpha^*_{A_{t-k}} - \alpha^*_{A_{t-k-1}})^2|A_t = a] \leq \epsilon_2,
\]
then we find that 
\[
 \mme[(\alpha_1 - \alpha^*_{A_1})^2|A_{1+k} = a] \leq \frac{1}{2c\gamma} \left(\gamma^2 + 2(1+\gamma) \epsilon_1 + \epsilon_2 \right).
\]
\end{lemma}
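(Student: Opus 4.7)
The strategy is to set up a one-step recursion for $D_t := \mme[(\alpha_t - \alpha^*_{A_t})^2 \mid A_{T+k} = a]$ of the form $D_{t+1} \leq (1 - 2c\gamma) D_t + E_{t+1}$, iterate it from $t=1$ to $T-1$, and then invoke the stationarity of $\{(\alpha_t, A_t)\}$ to identify $D_T$ with $\mme[(\alpha_1 - \alpha^*_{A_1})^2 \mid A_{1+k} = a]$. This matches the pattern used in proofs of SGD convergence under distribution drift, and the proof of Theorem \ref{thm:reg_bound} gives a useful template.

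The heart of the argument is a one-step contraction in the norm $|\alpha_t - \alpha^*_{A_t}|$. Conditioning on the entire chain $\{A_s\}_{s \in \mmn}$ and on $\alpha_t$, I would write
\begin{equation*}
\mme[(\alpha_{t+1} - \alpha^*_{A_t})^2 \mid A, \alpha_t] = (\alpha_t - \alpha^*_{A_t})^2 + 2\gamma (\alpha_t - \alpha^*_{A_t})(M_{A_t}(\alpha^*_{A_t}) - M_{A_t}(\alpha_t)) + \gamma^2 \mme[(\alpha - \text{err}_t)^2 \mid A, \alpha_t],
\end{equation*}
using $\mme[\text{err}_t \mid A, \alpha_t] = M_{A_t}(\alpha_t)$ and $\alpha = M_{A_t}(\alpha^*_{A_t})$. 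Since $M_a$ is non-decreasing (as $\hat{Q}$ is non-decreasing) and $|M_a(p) - M_a(\alpha^*_a)| \geq c|p - \alpha^*_a|$, the cross term is at most $-2c\gamma(\alpha_t - \alpha^*_{A_t})^2$; since $|\alpha - \text{err}_t| \leq 1$, the final term is at most $\gamma^2$. This gives $\mme[(\alpha_{t+1} - \alpha^*_{A_t})^2 \mid A, \alpha_t] \leq (1 - 2c\gamma)(\alpha_t - \alpha^*_{A_t})^2 + \gamma^2$. I would then split $\alpha_{t+1} - \alpha^*_{A_{t+1}} = (\alpha_{t+1} - \alpha^*_{A_t}) + (\alpha^*_{A_t} - \alpha^*_{A_{t+1}})$, expand the square, and control the cross term by the a.s. bound $|\alpha_{t+1} - \alpha^*_{A_t}| \leq 1 + \gamma$, which follows from Lemma \ref{lem:alphat_is_bounded} combined with $\alpha^*_{A_t} \in [0,1]$. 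Taking expectation conditional on $\{A_{T+k} = a\}$ then yields the recursion with $E_{t+1} = \gamma^2 + 2(1+\gamma) \mme[|\alpha^*_{A_t} - \alpha^*_{A_{t+1}}| \mid A_{T+k}=a] + \mme[(\alpha^*_{A_t} - \alpha^*_{A_{t+1}})^2 \mid A_{T+k}=a]$.

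Unrolling from $t = 1$ to $T-1$ and reindexing produces exactly the right-hand side of the first claimed inequality, modulo replacing $D_T$ by the left-hand side. Stationarity of $\{(\alpha_t, A_t)\}$ guarantees that the joint law of $(\alpha_t, A_t, A_{t+k})$ does not depend on $t$, so $D_T = \mme[(\alpha_T - \alpha^*_{A_T})^2 \mid A_{T+k} = a] = \mme[(\alpha_1 - \alpha^*_{A_1})^2 \mid A_{1+k} = a]$, closing the loop. For the second inequality, the uniform hypotheses cap each $E_t$ by $\bar E := \gamma^2 + 2(1+\gamma)\epsilon_1 + \epsilon_2$, the geometric series $\sum_{t=2}^T (1-2c\gamma)^{T-t}$ is bounded by $1/(2c\gamma)$ (using $2c\gamma \in (0,1)$), and $(1-2c\gamma)^{T-1} D_1 \to 0$ as $T \to \infty$ because $D_1 \leq (1+\gamma)^2$ by Lemma \ref{lem:alphat_is_bounded}. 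The one conceptually subtle point is the stationarity identification: the recursion propagates forward in time but the lemma is stated at time $1$, so swapping the conditioning time via stationarity of the joint process is essential. Everything else is bookkeeping, but care is needed to obtain the exact constant $2(1+\gamma)$ in front of the first-order drift term, which requires the sharp bound on $|\alpha_{t+1} - \alpha^*_{A_t}|$.
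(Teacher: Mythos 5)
Your proposal is correct and follows essentially the same route as the paper's proof: the same decomposition $\alpha_{t+1}-\alpha^*_{A_{t+1}}=(\alpha_{t+1}-\alpha^*_{A_t})+(\alpha^*_{A_t}-\alpha^*_{A_{t+1}})$, the same contraction via the lower bound on $|M_a(p)-M_a(\alpha^*_a)|$ together with monotonicity of $M_a$, the same use of Lemma \ref{lem:alphat_is_bounded} for the $2(1+\gamma)$ constant, and the same stationarity identification followed by $T\to\infty$ for the second claim. No gaps.
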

\begin{proof}
Fix any $T \in \mmn$. Since $\{(\alpha_t,A_t)\}_{t \in \mmn}$ is stationary we have that 
\[
\mme[(\alpha_1 - \alpha^*_{A_1})^2|A_{1+k}] = \mme[(\alpha_T - \alpha^*_{A_T})^2|A_{T+k} ].
\]
Now note that
\begin{align*}
 \mme[(\alpha_T - \alpha^*_{A_T})^2|A_{T+k}]  & = \mme[(\alpha_T - \alpha^*_{A_{T-1}})^2|A_{T+k} ] + 2\mme[(\alpha_T - \alpha^*_{A_{T-1}})(\alpha^*_{A_{T-1}} - \alpha^*_{A_T})|A_{T+k} ]\\
 & \ \ \ \ +  \mme[(\alpha^*_{A_{T-1}} - \alpha^*_{A_T})^2|A_{T+k} ] \\
    & \leq  \mme[(\alpha_T - \alpha^*_{A_{T-1}})^2|A_{T+k}] + 2(1+\gamma)\mme[|\alpha^*_{A_{T-1}} - \alpha^*_{A_T}||A_{T+k} ]\\
    & \ \ \ \ + \mme[(\alpha^*_{A_{T-1}} - \alpha^*_{A_T})^2|A_{T+k} ],
\end{align*}
where on the last line we have applied Lemma \ref{lem:alphat_is_bounded}. The first term above can be bounded as
\begin{align*}
 & \mme[(\alpha_T - \alpha^*_{A_{T-1}})^2|A_{T+k}]\\
 & \leq  \mme[(\alpha_{T-1} - \alpha^*_{A_{T-1}})^2|A_{T+k} ] + 2\mme[\gamma(\alpha - \text{err}_{T-1})(\alpha_{T-1} - \alpha^*_{A_{T-1}})|A_{T+k}] + \gamma^2,
  \end{align*}
where we additionally have that
\begin{align*}
 &  \mme[(\alpha - \text{err}_T)(\alpha_{T-1} - \alpha^*_{A_{T-1}})|A_{T+k}]\\
 & = \mme[(\alpha - \mme[\text{err}_{T-1}|A_{T-1},\alpha_{T-1},A_{T+k}])(\alpha_{T-1} - \alpha^*_{A_{T-1}})|A_{T+k}]\\
 &  =   \mme[( M(\alpha^*_{A_{T-1}}|A_{T_1}) - M(\alpha_{T-1}|A_{T-1}))(\alpha_{T-1} - \alpha^*_{A_{T-1}})|A_{T+k}]\\
 & \leq - c\mme[(\alpha_{T-1} - \alpha^*_{A_{T-1}})^2|A_{T+k}].
  \end{align*}
Whence, 
  \[
 \mme[(\alpha_T - \alpha^*_{A_{T-1}})^2|A_{T+k} ] \leq (1-2c\gamma) \mme[(\alpha_{T-1} - \alpha^*_{A_{T-1}})^2|A_{T+k} ] + \gamma^2, 
 \]
 and plugging this into our first inequality yields
\begin{align*}
 \mme[(\alpha_T - \alpha^*_{A_T})^2|A_{T+k} ] & \leq (1-2c\gamma) \mme[(\alpha_{T-1} - \alpha^*_{A_{T-1}})^2|A_{T+k} ]  + \gamma^2\\
& \ \ \ \ + 2(1+\gamma)\mme[|\alpha^*_{A_{T-1}} - \alpha^*_{A_T}||A_{T+k} ] + \mme[(\alpha^*_{A_{T-1}} - \alpha^*_{A_T})^2|A_{T+k} ].
  \end{align*}
Repeating this argument inductively gives 
\begin{align*}
& \mme[(\alpha_T - \alpha^*_{A_T})^2|A_{T+k} = a]\\
& \leq (1-2c\gamma)^{T-1} \mme[(\alpha_1 - \alpha^*_{A_1})^2|A_{T+k} = a] \\
&   +  \sum_{t=2}^{T} (1-2c\gamma)^{T-t} \bigg(\gamma^2 +  2(1+\gamma)\mme[|\alpha^*_{A_{t-1}} - \alpha^*_{A_t}||A_{T+k} = a] + \mme[(\alpha^*_{A_{t-1}} - \alpha^*_{A_t})^2|A_{T+k} = a]\bigg).
\end{align*}
The final part of the lemma follows by sending $T \to \infty$.
\end{proof}

\begin{lemma}\label{lem:bound_on_mean_given_a}
Assume that for all $a \in \mathcal{A}$ and $p \in [-\gamma,1+\gamma]$, $M(\cdot|a)$ admits the second order Taylor expansion
\[
M(p|a) - M(\alpha^*_a|a) = C^1_a(p-\alpha^*_a) + C^2_{p,a}(p-\alpha^*_a)^2,
\]
where $0<c_1 \leq C^1_a \leq C_1 < 1/\gamma$ and $|C^2_{p,a}| \leq C_2$. Then, for all $T \in \mmn$ and $a \in \mathcal{A}$,
 \begin{align*}
\left| \mme[\textup{err}_1|A_1 = a] - \alpha \right| & \leq  C_1(1-\gamma c_1)^{T-1}\mme[|\alpha_1 - \alpha^*_{A_1}||A_T=a] \\
&   + \sum_{t=1}^{T-1} C_1 C_2 \gamma (1-c_1\gamma)^{T-t-1}\mme[(\alpha_t - \alpha_{A^*_t})^2|A_T=a]\\
& +   \sum_{t=1}^{T-1} C_1(1-c_1\gamma)^{T-t-1}\mme[|\alpha^*_{A_{t+1}} - \alpha^*_{A_t}||A_T]  + C_2 \mme[(\alpha_T - \alpha^*_{A_T})^2|A_T=a].
\end{align*}
Furthermore, suppose the assumptions of Lemma \ref{lem:bound_on_alpha_approx_given_state} hold and that $\forall a \in \mathcal{A}$ and $k \in \{0,1,2,\dots\}$,
\[
\mme[|\alpha^*_{A_{T-k}} - \alpha^*_{A_{T-k-1}}||A_T = a] \leq \epsilon_1 \text{ and } \mme[(\alpha^*_{A_{T-k}} - \alpha^*_{A_{T-k-1}})^2|A_T = a] \leq \epsilon_2.
\]
Then $\forall a \in \mathcal{A}$,
\[
\left| \mme[\textup{err}_1|A_1 = a] - \alpha \right| \leq \left(C_1C_2 \frac{1}{c_1} + C_2 \right)\frac{1}{2c\gamma} (\gamma^2 + 2(1+\gamma) \epsilon_1 + \epsilon_2)+ \frac{C_1}{c_1\gamma}\epsilon_1.
 \]
\end{lemma}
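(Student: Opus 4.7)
The plan is to use the Taylor hypothesis to reduce the problem to controlling the first two conditional moments of $\delta_t := \alpha_t - \alpha^*_{A_t}$, and then to obtain a contraction for the conditional mean by conditioning on the full state trajectory. By stationarity, $\mme[\textup{err}_1|A_1=a] = \mme[\textup{err}_T|A_T=a]$, and the HMM structure gives $\mme[\textup{err}_T|A_T,\alpha_T] = M_{A_T}(\alpha_T)$. Applying the second-order Taylor expansion centered at $\alpha^*_a$ and taking absolute values yields
\[
|\mme[\textup{err}_1|A_1=a] - \alpha| \leq C_1\,|\mme[\delta_T|A_T=a]| + C_2\,\mme[\delta_T^2|A_T=a].
\]
The quadratic term is exactly what Lemma \ref{lem:bound_on_alpha_approx_given_state} controls, so the crux is to bound the first moment.

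To handle $|\mme[\delta_T|A_T=a]|$, I would condition on the full sigma-algebra $\mathcal{F}_T := \sigma(A_1,\ldots,A_T)$, which makes both $C^1_{A_t}$ and the drift $\alpha^*_{A_t}-\alpha^*_{A_{t+1}}$ deterministic. Setting $\zeta_t := \mme[\delta_t|\mathcal{F}_T]$ and using the HMM property $\mme[\textup{err}_t|\mathcal{F}_T,\alpha_t] = M_{A_t}(\alpha_t)$, the ACI update combined with the Taylor expansion produces the one-step identity
\[
\zeta_{t+1} = (1-\gamma C^1_{A_t})\zeta_t + (\alpha^*_{A_t}-\alpha^*_{A_{t+1}}) - \gamma\,\mme[C^2_{\alpha_t,A_t}\delta_t^2\,|\,\mathcal{F}_T].
\]
The hypothesis $C_1 < 1/\gamma$ guarantees $0 < 1-\gamma C^1_{A_t} \leq 1-\gamma c_1 < 1$, so taking absolute values gives the contraction
\[
|\zeta_{t+1}| \leq (1-\gamma c_1)|\zeta_t| + |\alpha^*_{A_t}-\alpha^*_{A_{t+1}}| + \gamma C_2\, \mme[\delta_t^2\,|\,\mathcal{F}_T].
\]

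Iterating this from $t=1$ to $T-1$, bounding $|\zeta_1| \leq \mme[|\delta_1||\mathcal{F}_T]$, taking $\mme[\cdot|A_T=a]$ via the tower property, and invoking $|\mme[\delta_T|A_T=a]| \leq \mme[|\zeta_T||A_T=a]$ produces the first claimed inequality. For the second claim, I would substitute the uniform bound $\mme[\delta_t^2|A_T=a] \leq (2c\gamma)^{-1}(\gamma^2+2(1+\gamma)\epsilon_1+\epsilon_2)$ obtained from Lemma \ref{lem:bound_on_alpha_approx_given_state} (applicable after stationarity shifts the conditioning to a state $T-t$ steps ahead), use $\mme[|\alpha^*_{A_t}-\alpha^*_{A_{t+1}}||A_T=a] \leq \epsilon_1$, and send $T\to\infty$. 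The leading $(1-\gamma c_1)^{T-1}$ term vanishes because $|\delta_1|$ is bounded by a constant via Lemma \ref{lem:alphat_is_bounded}, while each geometric sum $\sum_{t=1}^{T-1}(1-\gamma c_1)^{T-t-1}$ converges to $1/(\gamma c_1)$; collecting the resulting coefficients produces the stated closed-form bound.

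The main subtlety is the treatment of the random coefficient $C^1_{A_t}$. If one instead conditioned only on $A_T = a$, then $|\mme[(1-\gamma C^1_{A_t})\delta_t|A_T=a]|$ can only be bounded by $(1-\gamma c_1)\,\mme[|\delta_t||A_T=a]$ rather than by $(1-\gamma c_1)|\mme[\delta_t|A_T=a]|$, so the recursion would not close in terms of the sequence $\{|\mme[\delta_t|A_T=a]|\}_t$. Conditioning on the full path $\mathcal{F}_T$ resolves this by turning $C^1_{A_t}$ into a deterministic multiplier that factors cleanly out of the conditional expectation, while simultaneously preserving the Bernoulli conditional distribution of $\textup{err}_t$ that underlies the Taylor-based update.
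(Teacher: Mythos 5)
Your proof is correct and follows essentially the same route as the paper's: Taylor-expand $M_{A_T}$ around $\alpha^*_{A_T}$, control the quadratic term via Lemma \ref{lem:bound_on_alpha_approx_given_state}, unroll the ACI update to obtain a geometric contraction of the first-order term with rate $1-\gamma c_1$, and then send $T \to \infty$. The only difference is bookkeeping: the paper carries the random product $C^1_{A_T}\prod_s(1-\gamma C^1_{A_s})$ inside a single conditional expectation, whereas you condition on the full path $\sigma(A_1,\dots,A_T)$ so that the coefficients become deterministic and the recursion closes for $\zeta_t = \mme[\alpha_t - \alpha^*_{A_t}\mid A_1,\dots,A_T]$ — a valid, and arguably cleaner, way to make the same induction rigorous.
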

\begin{proof}
Fix any $T \in \mmn$. Since $\{(\text{err}_t,A_t)\}_{t \in \mmn}$ is stationary we have that 
\begin{align*}
\mme[\text{err}_1|A_1=a] = \mme[\text{err}_T|A_T=a].
\end{align*}
Then, by Taylor expanding $M(\cdot|A_T)$ we find that
\begin{align*}
 & \left|\mme[\text{err}_T|A_T=a] - \alpha \right| = \left|\mme[M(\alpha_T|A_T) - M(\alpha^*_{A_T}|A_T)|A_T]\right|\\
 & \leq  \left|\mme[C^1_{A_T}(\alpha_T - \alpha^*_{A_T})|A_T]\right| + \mme[C_{\alpha_T,A_T}^2(\alpha_T - \alpha^*_{A_T})^2|A_T]\\
 & \leq  \left|\mme[C^1_{A_T}(\alpha_T - \alpha^*_{A_T})|A_T]\right|  + C_2 \mme[(\alpha_T - \alpha^*_{A_T})^2|A_T].
 \end{align*}
 The first term above can be further bounded as
 \begin{align*}
& \left|\mme[C^1_{A_T}(\alpha_T - \alpha^*_{A_T})|A_T] \right|\\
&  =\left| \mme[C^1_{A_T}(\alpha_{T-1} + \gamma(\alpha - \text{err}_{T-1}) - \alpha^*_{A_{T-1}})|A_T] + \mme[C^1_{A_T}(\alpha_{A_{T-1}}^* - \alpha^*_{A_T})|A_T] \right| \\
 & \leq \left| \mme[C_{A_T}^1(\alpha_{T-1} - \alpha^*_{A_{T-1}})] + \gamma\mme[ C^1_{A_T}(M(\alpha^*_{A_{T-1}}|A_{T-1}) - M(\alpha_{T-1}|A_{T-1}))|A_T] \right|\\
 & \ \ \ \ \ \ \ \ \ \ \ \ \ \ \  + C_1\mme[|\alpha_{A_{T-1}}^* - \alpha^*_{A_T}||A_T]\\
  & =\left| \mme[C^1_{A_T}(1-\gamma C^1_{A_{T-1}})(\alpha_{T-1} - \alpha^*_{A_{T-1}})|A_T]\right|  +  C_1C_2\gamma\mme[ (\alpha_{T-1} - \alpha^*_{A_{T-1}})^2 | A_T]\\
  & \ \ \ \ \ \ \ \ \ \ \ \ \ \ \  + C_1\mme[|\alpha_{A_{T-1}}^* - \alpha^*_{A_T}||A_T].
\end{align*}
The desired result follows by repeating this process inductively. Finally, the last part of the Lemma follows by sending $T \to \infty$ and applying the result of Lemma \ref{lem:bound_on_alpha_approx_given_state}.

\end{proof}

As a final aside we remark that in the main text we claimed that in the ideal case where $M(p|a) = p$ for all $p \in [0,1]$ this bound can be replaced by 
\[
\left| \mme[\textup{err}_1|A_1 = a] - \alpha \right| \leq 2(\gamma + \gamma^{-1} \epsilon_1).
\]
This can be justified by using the fact that in this case we have that for all $p \in [-\gamma,1 + \gamma]$ 
\[
M(p|a) - M(\alpha^*_a|a) = (p-\alpha^*_a) + C^2_{p,a}
\]
with $|C^2_{p,a}| \leq \gamma$. The desired result then follows by repeating the argument of Lemma \ref{lem:bound_on_mean_given_a}.

\subsection{Technical lemmas}\label{sec:technical_lemmas}

\begin{proof}[Proof of Lemma \ref{lem:neg_corr_mon_fns}:]
We assume without loss of generality that $f$ is non-decreasing and $g$ is non-increasing as otherwise one can simply multiply both $f$ and $g$ by $-1$.

Let $g^U := \sup\{g(y) : f(y) > \mme[f(Y)]$ and $g^L := \inf\{g(y) : f(y) \leq \mme[f(Y)]$. By the monotonicity of $f$ and $g$ we clearly have that $g^L \geq g^U$. Therefore,
\begin{align*}
& \mme[f(Y)g(Y)] - \mme[f(Y)]\mme[g(Y)] = \mme[(f(Y) - \mme[f(Y)])g(Y)]\\
& \leq   \mme[(f(Y) - \mme[f(Y)])g^L\bone_{f(Y) \leq \mme[f(Y)]}] + \mme[(f(Y) - \mme[f(Y)])g^U\bone_{f(Y) > \mme[f(Y)]}] \\
& \leq   \mme[(f(Y) - \mme[f(Y)])g^L\bone_{f(Y) \leq \mme[f(Y)]}] + \mme[(f(Y) - \mme[f(Y)])g^L\bone_{f(Y) > \mme[f(Y)]}] \\
& = 0,
\end{align*}
as desired. 
\end{proof}

\begin{lemma}\label{lem:Hoef_lem}[Hoeffding's Lemma \cite{Hoeffding1963}]
Let $X$ be a mean 0 random variable such that $X \in [a,b]$ almost surely. Then, for all $\lambda \in \mmr$
\[
\mme[\exp(\lambda X)] \leq \exp\left(\lambda^2\frac{(b-a)^2}{8}\right).
\]
\end{lemma}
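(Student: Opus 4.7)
The plan is to use the convexity of the exponential, reduce the problem to analyzing a scalar function of one variable, and then conclude by a Taylor expansion with a bounded second derivative.

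First I would exploit convexity of $x \mapsto \exp(\lambda x)$ on $[a,b]$. Writing any $x \in [a,b]$ as the convex combination $x = \frac{b-x}{b-a}a + \frac{x-a}{b-a}b$ gives
\[
\exp(\lambda x) \leq \frac{b-x}{b-a}\exp(\lambda a) + \frac{x-a}{b-a}\exp(\lambda b).
\]
Taking expectations and using $\mathbb{E}[X] = 0$ (so $\mathbb{E}[b-X] = b$ and $\mathbb{E}[X-a] = -a$) collapses this to
\[
\mathbb{E}[\exp(\lambda X)] \leq \frac{b}{b-a}\exp(\lambda a) - \frac{a}{b-a}\exp(\lambda b).
\]
Note that since $\mathbb{E}[X]=0$ and $X \in [a,b]$ we must have $a \leq 0 \leq b$, so the coefficients $p := -a/(b-a) \in [0,1]$ and $1-p = b/(b-a) \in [0,1]$ are legitimate weights.

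Next I would introduce the single-variable function $\phi(u) := -pu + \log(1 - p + p e^{u})$ with $u := \lambda(b-a)$, so the right-hand side of the previous display equals $\exp(\phi(u))$ after a short rearrangement (pull out $\exp(\lambda a) = \exp(-pu)$ from the bracketed sum $1 - p + p e^{u}$). The task then reduces to proving $\phi(u) \leq u^2/8$ for all $u \in \mathbb{R}$.

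Finally I would carry out a Taylor expansion of $\phi$ about $0$. Direct differentiation yields $\phi(0) = 0$, $\phi'(u) = -p + \frac{pe^u}{1-p+pe^u}$ so $\phi'(0) = 0$, and
\[
\phi''(u) = \frac{p(1-p)e^{u}}{(1-p+pe^{u})^{2}} = q(u)(1-q(u)),
\]
where $q(u) := pe^{u}/(1-p+pe^{u}) \in [0,1]$. The map $q \mapsto q(1-q)$ is bounded by $1/4$, so $\phi''(u) \leq 1/4$ uniformly. By Taylor's theorem with remainder, $\phi(u) = \phi(0) + \phi'(0) u + \tfrac{1}{2}\phi''(\xi) u^2 \leq u^{2}/8$ for some $\xi$ between $0$ and $u$. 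Substituting $u = \lambda(b-a)$ gives the claimed bound $\mathbb{E}[\exp(\lambda X)] \leq \exp(\lambda^{2}(b-a)^{2}/8)$. The only mildly delicate step is the uniform bound $\phi''(u) \leq 1/4$, but this is immediate from the AM-GM style observation $q(1-q) \leq 1/4$; everything else is routine.
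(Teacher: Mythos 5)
Your proof is correct and is the standard textbook derivation of Hoeffding's lemma (convexity of the exponential, reduction to the log--moment--generating function $\phi$, and the uniform bound $\phi'' \le 1/4$ via Taylor). The paper does not supply its own proof — it merely cites \cite{Hoeffding1963} — so there is nothing to compare against beyond noting that your argument fills that gap correctly; the only degenerate case worth a one-line remark is $a=b=0$, where $X=0$ a.s. and the inequality is trivial, avoiding the division by $b-a$.
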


\end{document}